\begin{document}

\def\Xint#1{\mathchoice
{\XXint\displaystyle\textstyle{#1}}%
{\XXint\textstyle\scriptstyle{#1}}%
{\XXint\scriptstyle\scriptscriptstyle{#1}}%
{\XXint\scriptscriptstyle\scriptscriptstyle{#1}}%
\!\int}
\def\XXint#1#2#3{{\setbox0=\hbox{$#1{#2#3}{\int}$ }
\vcenter{\hbox{$#2#3$ }}\kern-.58\wd0}}
\def\ddashint{\Xint=}
\def\dashint{\Xint-}

\def\Xsum#1{\mathchoice
{\XXsum\displaystyle\textstyle{#1}}%
{\XXsum\textstyle\scriptstyle{#1}}%
{\XXsum\scriptstyle\scriptscriptstyle{#1}}%
{\XXsum\scriptscriptstyle\scriptscriptstyle{#1}}%
\!\sum}
\def\XXsum#1#2#3{{\setbox0=\hbox{$#1{#2#3}{\sum}$ }
\vcenter{\hbox{$#2#3$ }}\kern-.51\wd0}}
\def\ddashsum{\Xsum=}
\def\dashsum{\Xsum-}

\newcommand\cutoffint{\mathop{-\hskip -4mm\int}\limits}
\newcommand\cutoffsum{\mathop{-\hskip -4mm\sum}\limits}
\newcommand\cutoffzeta{-\hskip -1.7mm\zeta} 
\newcommand{\goth}[1]{\ensuremath{\mathfrak{#1}}}
\newcommand{\bbox}{\normalsize {}%
        \nolinebreak \hfill $\blacksquare$ \medbreak \par}
\newcommand{\simall}[2]{\underset{#1\rightarrow#2}{\sim}}

\newtheorem{theorem}{Theorem}[section]
\newtheorem{lem}[theorem]{Lemma}
\newtheorem{coro}[theorem]{Corollary}
\newtheorem{problem}[theorem]{Problem}
\newtheorem{conjecture}[theorem]{Conjecture}
\newtheorem{prop}[theorem]{Proposition}
\newtheorem{propdefn}[theorem]{Proposition-Definition}
\newtheorem{lemdefn}[theorem]{Lemma-Definition}
\theoremstyle{definition}
\newtheorem{defn}[theorem]{Definition}
\newtheorem{rk}[theorem]{Remark}
\newtheorem{ex}[theorem]{Example}
\newtheorem{coex}[theorem]{Counterexample}
\newtheorem{algorithm}[theorem]{Algorithm}
\newtheorem{convention}[theorem]{Convention}
\newtheorem{principle}[theorem]{Principle}

\renewcommand{\theenumi}{{\it\roman{enumi}}}
\renewcommand{\theenumii}{{\alph{enumii}}}

\newenvironment{thmenumerate}
{\leavevmode\begin{enumerate}[leftmargin=1.5em]}{\end{enumerate}}

\newcommand{\nc}{\newcommand}
\newcommand{\delete}[1]{}
\newcommand{\aside}[1]{\delete{#1}}

\nc{\mlabel}[1]{\label{#1}}  
\nc{\mcite}[1]{\cite{#1}}  
\nc{\mref}[1]{\ref{#1}}  
\nc{\mbibitem}[1]{\bibitem{#1}} 

\delete{
\nc{\mlabel}[1]{\label{#1}  
{\hfill \hspace{1cm}{\bf{{\ }\hfill(#1)}}}}
\nc{\mcite}[1]{\cite{#1}{{\bf{{\ }(#1)}}}}  
\nc{\mref}[1]{\ref{#1}{{\bf{{\ }(#1)}}}}  
\nc{\mbibitem}[1]{\bibitem[\bf #1]{#1}} 
}

\newcommand{\bottop}{\top\hspace{-0.8em}\bot}
\nc{\mtop}{\top\hspace{-1mm}}
\nc{\tforall}{\text{ for all }}
\nc{\bfcf}{{\calc}}
\nc{\calj}{{\mathcal J}}
\nc{\call}{{\mathcal L}}
\nc{\calr}{{\mathcal R}}

\newcommand{\R}{\mathbb{R}}
\newcommand{\FC}{\mathbb{C}}
\newcommand{\K}{\mathbb{K}}
\newcommand{\Z}{\mathbb{Z}}
\nc{\PP}{\mathbb{P}}
\newcommand{\Q}{\mathbb{Q}}
\newcommand{\N}{\mathbb{N}}
\newcommand{\F}{\mathbb{F}}
\newcommand{\T}{\mathbb{T}}
\newcommand{\G}{\mathbb{G}}
\newcommand{\C}{\mathbb{C}}

\newcommand{\bfc}{\calc}

\newcommand{\bbR}{\mathbb{R}}
\newcommand{\bbC}{\mathbb{C}}
\newcommand {\frakb}{{\mathfrak {b}}}
\newcommand {\frakc}{{\mathfrak {c}}}
\newcommand {\frakd}{{\mathfrak {d}}}
\newcommand {\fraku}{{\mathfrak {u}}}
\newcommand {\fraks}{{\mathfrak {s}}}
\newcommand {\frakS}{{\mathfrak {S}}}

\newcommand {\cala}{{\mathcal {A}}}
\newcommand {\calb}{\mathcal {B}}
\newcommand {\calc}{{\mathcal {C}}}
\newcommand {\cald}{{\mathcal {D}}}
\newcommand {\cale}{{\mathcal {E}}}
\newcommand {\calf}{{\mathcal {F}}}
\newcommand {\calg}{{\mathcal {G}}}
\newcommand {\calh}{\mathcal{H}}
\newcommand {\cali}{\mathcal{I}}
\newcommand {\callt}{{\mathcal {L}_\otimes}}
\newcommand {\calm}{{\mathcal {M}}}
\newcommand {\calp}{{\mathcal {P}}}
\newcommand {\cals}{{\mathcal {S}}}
\newcommand {\calv}{{\mathcal {V}}}

\newcommand{\conefamilyc}{{\mathfrak{C}}}
\newcommand{\conefamilyd}{{\mathfrak{D}}}

\newcommand{\Hol}{\text{Hol}}
\newcommand{\Mer}{\text{Mer}}
\newcommand{\lin}{\text{lin}}
\nc{\Id}{\mathrm{Id}}
\nc{\ot}{\otimes}
\nc{\bt}{\boxtimes}
\nc{\id}{\mathrm{Id}}
\nc{\Hom}{\mathrm{Hom}}
\nc{\im}{{\mathfrak Im}}
\nc{\supp}{\mathrm{Dep}}

\nc{\ola}[1]{\stackrel{#1}{\longrightarrow}}

\newcommand{\tddeux}[2]{\begin{picture}(12,5)(0,-1)
\put(3,0){\circle*{2}}
\put(3,0){\line(0,1){5}}
\put(3,5){\circle*{2}}
\put(3,-2){\tiny #1}
\put(3,4){\tiny #2}
\end{picture}}

\newcommand{\tdtroisun}[3]{\begin{picture}(20,12)(-5,-1)
\put(3,0){\circle*{2}}
\put(-0.65,0){$\vee$}
\put(6,7){\circle*{2}}
\put(0,7){\circle*{2}}
\put(5,-2){\tiny #1}
\put(6,5){\tiny #2}
\put(-5,8){\tiny #3}
\end{picture}}

%
%

\nc{\mge}{_{bu}\!\!\!\!{}}

\nc{\vep}{\varepsilon}

\def \e {{\epsilon}}
\nc{\syd}[1]{}
\newcommand{\cy}[1]{{\color{cyan}  #1}}
\newcommand{\sy}[1]{{\color{purple}  #1}}
\newcommand{\zb }[1]{{\color{blue}  #1}}
\newcommand{\zbd}[1]{}
\newcommand{\li}[1]{{\color{red} #1}}
\newcommand{\lit}[2]{\sout{\color{red}{#1}}{\color{red} #2}}
\newcommand{\lir}[1]{{\it\color{red} (Li: #1)}}

\nc{\prt}{P-}
\nc{\Prt}{P-}

\newcommand{\roc}{${\mathcal R}$\xspace}
\newcommand{\loc}{locality\xspace}
\newcommand{\Loc}{Locality\xspace}
\nc{\orth}{orthogonal\xspace}
\nc{\tloc}{L-}
\newcommand{\lset}{{\bf \loc SET }}
\newcommand{\set}{{\bf SET }}
\newcommand{\xat}{{X^{_\top 2}}}
\newcommand{\xbt}{{X^{_\top 3}}}
\newcommand{\xct}{{X^{_\top 4}}}

\newcommand{\gat}{{G^{_\top 2}}}
\newcommand{\gbt}{{G^{_\top 3}}}
\newcommand{\gct}{{G^{_\top 4}}}
\nc{\gnt}{{G^{_\top n}}}

\newcommand{\htwot}{{\calh ^{\ot_\top 2}}}
\newcommand{\hbt}{{calh ^{\ot_\top 3}}}
\newcommand{\hct}{{\calh ^{\ot_\top 4}}}

\title{Renormalisation via locality morphisms}

\author{Pierre Clavier}
\address{Institute of Mathematics,
University of Potsdam,
D-14476 Potsdam, Germany}
\email{clavier@math.uni-potsdam.de}

\author{Li Guo}
\address{Department of Mathematics and Computer Science,
         Rutgers University,
         Newark, NJ 07102, USA}
\email{liguo@rutgers.edu}

\author{Sylvie Paycha}
\address{Institute of Mathematics,
University of Potsdam,
D-14469 Potsdam, Germany\\ On leave from the Universit\'e Clermont-Auvergne\\
Clermont-Ferrand, France}
\email{paycha@math.uni-potsdam.de}

\author{Bin Zhang}
\address{School of Mathematics, Yangtze Center of Mathematics,
Sichuan University, Chengdu, 610064, China}
\email{zhangbin@scu.edu.cn}

\begin{abstract} This is a survey on renormalisation in the locality setup  highlighting the role that locality morphisms can play  for renormalisation purposes. Having set up a  general framework to build regularisation maps, we illustrate renormalisation by locality algebra homomorphisms on three examples, the renormalisation at poles of conical zeta functions, branched zeta functions  and iterated integrals stemming from Kreimer's toy model.
\end{abstract}

\subjclass[2010]{08A55,16T99,81T15, 32A20, 	52B20}

\keywords{locality, renormalisation, partial algebra, operated algebra, Hopf algebra, Rota-Baxter algebra, symbols}

\maketitle

\tableofcontents

\section*{Introduction}

The study of certain models, might they stem from quantum field theory or  be of a pure mathematical nature like that of measuring the volume of a convex cone,   gives rise to  {\em formal expressions}. {    Their evaluation  can yield infinities} instead of the desired numerical invariants, a situation which   calls for renormalisation. Getting rid of the infinities  first
calls for a {\em regularisation} which turns the formal expressions into meaningful regularised expressions, typically  described in terms of an algebra homomorphism
\begin{equation}\label{eq:phireg}
 \phi^{\rm reg}: A\longrightarrow M
\end{equation}
defined on a certain algebra $A$ (the algebra structure reflects the structure of the family of formal expressions) with values in a space $M$ of meromorphic functions. Next, these meromorphic functions are processed to only keep the holomorphic parts which can then be evaluated at zero, giving rise to the {\em renormalised values} of the formal expressions.

A basic requirement is that the renormalised values obey a multiplicative property, a requirement  reminiscent of the  {\em  \loc principle} in quantum field theory. It states that an object is only directly influenced by its immediate surroundings which in practice translates to the independence and multiplicativity of measurements of independent events.

In the case of an univariate regularisation,  $M=\C[\vep^{-1},\vep]]$ and projecting onto the holomorphic part $M_+=\C [[\vep]]$ (by means of the minimal subtraction scheme) gives rise to a Rota-Baxter operator $\pi_+: M\longrightarrow M_+$. Yet the Rota-Baxter operator itself is not
 {multiplicative} so that a mere minimal subtraction scheme $\pi_+\circ \phi^{\rm reg}$ does not preserve multiplicativity.  However, if the space $A$ carries  a suitable Hopf algebra structure, an algebraic Birkhoff factorisation \`a la  Connes and Kreimer~\cite{CK} implemented on the regularised  map $\phi^{\rm reg}:A\longrightarrow M$  guarantees the multiplicativity of the renormalised map  $\phi_+^{\rm ren}: A\longrightarrow M_+$, that is, the conservation of products after
renormalisation.

An alternative to univariate regularisation is  multivariate regularisation, a setup which enables to encode locality as a guiding principle, thus opening the way to new opportunities, and new challenges. One the one hand, the
multivariate minimal subtraction scheme, when available, is generally  not an algebra homomorphism, and   does not even give  rise to a Rota-Baxter operator, a major obstacle to the implementation of an algebraic Birkhoff
factorisation even when $A$ carries a Hopf algebra structure. Yet on the other hand, the very multivariate nature  of the scheme provides an easy book keeping device to preserve products for certain pairs of elements, we call
{\em independent pairs};  in accordance with the locality principle in quantum field theory, multiplicativity holds for independent pairs of arguments. The latter property, together with the multiplicativity of the evaluation map at zero, allows the renormalisation to preserve products for   independent pairs of elements.

An algebraic formulation of the locality principle in renormalisation  was discussed in \cite{CGPZ1}. There, we  express locality as a symmetric binary relation, study locality versions of algebraic structures, and develop a machinery used to preserve locality during the renormalisation procedure. It turns out that the locality setup is important not only in renormalisation, but also crucial in exploring deeper structures as we can see in the example of lattice cones.

The key asset of a locality setup lies in the fact that  the minimal subtraction scheme can be viewed as {\em locality algebra homomorphism}. For  example, the algebra $\calm$ of multivariate meromorphic germs with linear poles carries a locality algebra structure $(\calm , \perp ^Q, \cdot)$ and the corresponding projection $\pi_+$ from the decomposition
$\calm =\calm_+\oplus \calm _-^Q$ to $\calm _+$ (which as we stressed  previously is not an algebra homomorphism)   is a locality algebra homomorphism. Thus, for a regularisation, provided  the source space $A$ can be equipped with a  {\em locality algebra} $(A, \top _A)$ structure, and provided  the regularised map  $\phi^{\rm reg}$  takes its values in $\calm $ and it is a locality algebra homomorphism, then a  {\em  multivariate subtraction scheme}  can be implemented on the regularised maps $\phi^{\rm reg}: (A, \top _A) {\longrightarrow} (\calm , \perp ^Q)$. In this \loc setup, the renormalised map $\pi _+\circ \phi^{\rm reg}$ still do not preserve products yet they   preserve partial products, which is what one needs.

To sum up, we work in a setup  which encompasses the principle of locality; locality detects  pairs of independent elements of $A$ and partial multiplicativity amounts to multiplicativity on pairs
of  {independent} elements. The purpose of this survey based on previous work by the authors \cite{CGPZ1,CGPZ2, CGPZ3} and \cite{GPZ1,GPZ2} is i) to demonstrate how to  achieve a multivariate regularisation of a formal expression so as to build  a \loc algebra homomorphism (so a multivariate version of (\ref{eq:phireg}))
\begin{equation}\label{eqphimreg}\phi^{\rm reg}:  {(A, \top _A)}\longrightarrow \left(\calm, \perp^Q\right)\end{equation}
on a \loc algebra $(A, \top^A)$ with values in the \loc algebra $(\calm , \perp^Q) $, and ii) to renormalise the resulting regularised locality algebra homomorphism, describing the general theory and  illustrating it by examples. Let us describe  the contents  of the paper in more detail.

Since there are different approaches to explore  locality, in Section~\ref{sec:parloc}, much of which is borrowed from \cite{Zh}, we review and compare various partial structures with the locality structures introduced in
\cite{CGPZ1}. In particular,  we view the \loc setup    as a  symmetric version of the more general \roc-setup which comprises  partial semigroups (Definition \ref{defn:SemigroupSch}) introduced in \cite{Sch} and we relate \roc-monoids to the selective category of Li-Bland and Weinstein~\cite{LW} with one object (Proposition~\ref{lem:selcatLWA}).  Thereafter, for the sake of simplicity, we choose to keep to the \loc setup which turns out to be sufficient for the renormalisation purposes we have in mind.

In Section~\ref{sec:localg}, we introduce \loc algebras  (Definition \ref{defn:localgebra}), a notion we first illustrate  by the pivotal  example of $\R^\infty$ (Example \ref{ex:Rinfty}) equipped with an inner product $Q$ which
induces an orthogonality relation $\perp^Q$, after which we discuss in Paragraph \ref{subsec:meroloc},   the algebra  $\calm$ of multivariate meromorphic germs with linear poles at zero,  equipped with a \loc relation induced by $\perp^Q$, which by a slight abuse of notation is denoted by the same symbol  (see Proposition \ref{prop:MCperp}).  Other relevant examples are the \loc algebra of  {\em  lattice cones } in Paragraph \ref{subsec:latticecones} and the \loc algebra of {\em  properly decorated rooted forests} in Paragraph \ref{subsec:forestsloc}.

Section~\ref{sec:locmor} is dedicated to  the main protagonists of this paper, namely   {\em  \loc morphisms} (Definition \ref{defn:locallmap}) of \loc algebras, so maps between \loc algebras which, as well as preserving   the \loc relation and \loc vector space structure, further preserve  the related partial product.

Amongst these are  {\em \loc projections } $\pi_+^Q:\calm\longrightarrow \calm_+$ onto the space $\calm_+$ of holomorphic germs at zero  built from the inner product $Q$. Such projections  arise from the decomposition
$\calm=\calm_+ \oplus \calm_-^Q$ (Eq. ~(\ref{eq:decmero})) induced by $Q$, and their \loc as morphisms is a consequence of  the fact that $\calm_+$ (resp. $\calm_-^Q$) is a \loc subalgebra (resp. \loc ideal) of $\calm$,
(Proposition \ref{prop:m-}).

Composed with the evaluation at zero ${\rm ev}_0$ these projections yield   useful  {\em  renormalisation schemes} discussed in Paragraph \ref{subsec:renscheme}:  \begin{equation}\label{eq:evpi}{\rm ev}_0\circ \pi_+^Q:  \calm  \longrightarrow \C,\end{equation} which can be viewed as a  {\em  multivariate minimal subtraction scheme}.

With this multivariate minimal subtraction scheme, a renormalisation process is reduced to two steps:
\begin{enumerate}
         \item to construct the regularised map $\phi^{\rm reg}:  {(A, \top _A)}\longrightarrow \left(\calm, \perp^Q\right)$;
         \item to  implement  renormalisation schemes of the type (\ref{eq:evpi}) to  the regularised map \hfill \break \noindent  $\phi^{\rm reg}:  {(A, \top _A)}\longrightarrow \left(\calm, \perp^Q\right)$ in order to build  the {\em renormalised map} $\phi^{\rm ren} := {\rm ev}_0\circ \pi_+^Q \circ \phi^{\rm reg}: A\to \C .$
       \end{enumerate}

Various \loc maps built in Section \ref{sec:locmor} are interpreted in Section \ref{sec:renorm} as regularisation maps $\phi^{\rm reg}: A\longrightarrow \calm$ which need to be renormalised,
all of which stem from formal sums and integrals as multivariate regularisations.

We first illustrate (in Paragraphs~\ref{subs:LocCones} and \ref {ss:renczv}) this multivariate approach  with {\em conical zeta functions} (resp. {\em branched zeta functions}), which to a lattice cone (resp. a decorated rooted forest), assign a renormalised value of the
{\em  regularised conical zeta function} (resp.  {\em regularised   branched zeta function}) at poles. The (partial) multiplicativity of the maps encoded in their very construction in our multivariate \loc setup, ensures their multiplicativity on orthogonal lattice cones (resp. independent decorated rooted forests).

In \cite{CGPZ1,GPZ1}, {\em conical zeta functions} (Paragraph~\ref {ss:renczv}), which generalise multiple zeta functions were built using exponential sums on lattice cones.
The exponential sum $S$, resp.  integral $I$  on a lattice cone correspond to the discrete, resp. continuous Laplace transformation of the characteristic function of the lattice cone (Proposition \ref{prop:SI}). One easily checks that   Laplace transforms of characteristic functions of smooth cones define meromorphic maps with linear poles;  the fact that $S$ and $I$ take their values in $\calm$ for any convex lattice cone, then follows from their additivity on disjoint unions combined with the fact that any convex lattice cone can be subdivided in smooth lattice cones. Both maps  define \loc algebra homomorphisms on the \loc algebra of lattice cones for a \loc relation  induced by the orthogonality relation $\perp^Q$ on $\R^\infty$. Their multiplicativity on orthogonal lattice cones follows from the usual homomorphism property of the exponential map on these cones.

A second example  which provides an  alternative  generalisation of multiple zeta functions, is given by {\em branched zeta functions}~\cite{CGPZ2} (discussed in Paragraph \ref{subsec:branchedzeta}) associated with rooted forests (Paragraph \ref{subsec:forestsloc}). These are built by means of a
 {\em  branching procedure} which strongly relies on  the universal properties of  properly decorated rooted forests (see Proposition~\ref{pp:universal}).  Such a branching procedure lifts  a map  $\phi$ defined on the decoration set to what we call a   {\em  branched map}  $\widehat \phi $ on the algebra of decorated forests
(see (\ref{eq:hatphi})). Applied to  a   {summation map} $\phi=\frakS _\lambda$ on the locality algebra $\Omega$ of meromorphic germs of symbols, this branching procedure gives rise to  a branched sum ${\widehat{\frakS _\lambda}}$ acting on the algebra of properly decorated rooted forests by meromorphic family of symbols on $\R_{\geq 0}$. The universal property  underlying the construction ensures the  multiplicativity on independent   {forests}. Combining this with the \loc morphism given by the Hadamard finite part at infinity ({\ref{eq:finitepart_const})-a linear form on polyhomogeneous pseudodifferential symbols which coincides on smoothing symbols with the limit at infinity-  extended to $\Omega$, gives rise to   {\em  branched regularised zeta functions}  $\zeta^{\rm reg, \lambda}$ defined on the \loc algebra of properly $\Omega $-decorated rooted forests.

In Paragraph \ref {subs:operation} we describe similar constructions based on the universal properties of properly decorated rooted forests~\cite{CGPZ3}, which yield a third example (Paragraph~\ref{subsec:Kreimer}), namely $\calm$-valued maps stemming from iterated integrals  arising in Kreimer's toy model~\cite{K}.

To sum up, the \loc setup combined with the multivariate regularisation provides a way to preserve (partial) multiplicativity while renormalising, in accordance with the \loc principle  in physics.

\section{Partial  versus locality structures}
\label{sec:parloc}
We review and compare various partial product structures with the locality structures introduced in \cite{CGPZ1}; although the concept of  algebraic \loc structures   is to our knowledge new in the context of renormalisation,  partial products have been used in other contexts, hence the need to relate the two concepts, partial and \loc products. This section is based on \cite{Zh}.

\subsection{Partial semigroups}

We start with a generalisation of the notion of a \loc set introduced in \cite{CGPZ1}, by dropping the symmetry  property of the relation required in \cite{CGPZ1}:
\begin{defn}
 \begin{enumerate}
  \item An {\bf \roc-set } is a couple $(X,\top)$ with $X$ a set and $\top\subset X\times X$ a binary relation on $X$. We also write $X\times_\top X$ for $\top$.
  \item Let $(X,\top)$ be an \roc-set and $U\subset X$. We write ${}^\top U$ (resp. $U^\top$) the {\bf left polar set} (resp. {\bf right polar set})
  of $U$; defined by
  \begin{equation}\label{eq:Ul}
   {}^\top U : =\{x\in X|\,(x,u)\in\top~\forall u\in U\}
  \end{equation}
  (resp.
  \begin{equation}\label{eq:Ur}
   U^\top : =\{x\in X|\,(u,x)\in\top~\forall u\in U\}).
  \end{equation}
 \end{enumerate}
If $\top$ is a symmetric binary relation, we call, as in~\cite{CGPZ1}, the couple $(X, \top)$ a {\bf \loc set}, in which case  $ {}^\top U= U^\top$.
\end{defn}
Let {\bf RS} (resp. {\bf LS}) denote the category of \roc-sets (resp. \loc  sets) whose morphisms are maps $\phi: (X,\top_X)\longrightarrow (Y, \top_Y)$ such that $(\phi\times \phi)(\top_X)\subset \top_Y$, called {\bf \roc-maps} (resp. {\bf \loc maps}).

We equip an {\bf \roc-set} with four distinct, however related, partial product structures, the first one is a generalisation (dropping the symmetry condition) taken from \cite{Zh} of the locality relation introduced in \cite{CGPZ1}:

\begin{defn}
 An {\bf \roc-semigroup} is an \roc-set $(X,\top)$ together with a partial product map
 \begin{align*}
  \mu:\top & \mapsto X \\
     (x,y) & \mapsto x\,y
 \end{align*}
 which we denote by $(X, \top, \mu)$, such that:
 \begin{enumerate}
  \item For any subset $U\subset X$, \begin{equation}\label{eq:muUl}\mu(({}^\top U\times {}^\top U)\cap\top)\subseteq{}^\top U, \end{equation}
  \item For any subset $U\subset X$, \begin{equation}\label{eq:muUr}\mu((U^\top\times U^\top)\cap\top)\subseteq U^\top. \end{equation}
  \item For any $a,b,c$ in $X$ such that any couple lies in $\top$ we have $(a\,b)\,c=a\,(b\,c)$.
 \end{enumerate}
If $\top$ is a symmetric binary relation,  condition (\ref{eq:muUl}) coincides with (\ref{eq:muUr}) and we call $(X, \top, \mu)$ a {\bf \loc semigroup}.

Let us denote by {\bf RSg} (resp. {\bf LSg}) the category of \roc- (resp. \loc) semigroups whose morphisms are \roc-maps (resp. \loc maps)
\[\phi: (X,\top_X, \mu_X)\longrightarrow (Y, \top_Y, \mu_Y),\]
which are partially multiplicative
\[(a, b)\in \top_X\Longrightarrow \phi( \mu_X(a, b))=  \mu_Y\left(\phi(a), \phi(b)\right).\]
They are called {\bf  \roc-morphisms} (resp. {\bf \loc morphisms}).
\end{defn}

\begin{rk} Note that a map between  two \loc semigroups is a \loc morphism  if and only if it is an \roc-morphism.
\end{rk}

\begin{rk}
It is easy to check that
\begin{itemize}
\item Eq.~(\ref{eq:muUl}) is equivalent to   \begin{equation}\label{eq:muxl}\left(x\top z\wedge y\top z\wedge x\top y\right)\Longrightarrow (x\, y)\top z \quad\forall (x, y, z)\in X^3, \end{equation}
\item Eq.~(\ref{eq:muUr}) is equivalent to   \begin{equation}\label{eq:muxr}\left(z\top x\wedge z\top y\wedge x\top y\right)\Longrightarrow z\top (x\, y)\quad \forall (x, y, z)\in X^3. \end{equation}
\end{itemize}
\end{rk}
The following definitions are taken from \cite{Zh}.
\begin{defn} (see \cite[Definition 3.1]{Zh})
\begin{enumerate}
\item A {\bf strong \roc-semigroup} is an \roc-set $(X,\top)$ together with a partial product map
 \begin{align*}
  \mu:\top & \mapsto X \\
     (x,y) & \mapsto x\,y
 \end{align*}
  also denoted by $(X, \top, \mu)$, such that for any $x, y, z\in X$ :
 \begin{equation*}
  \left((x,y)\in\top\wedge (y,z)\in\top \right) \Longrightarrow \left((x\,y,z)\in\top\wedge (x,y\,z)\in\top \wedge (x\, y)\, z= x\, (y\, z)\right).
 \end{equation*}
 Let us denote by  {\bf SRSg}   the category of strong \roc-semigroups whose morphisms are   {\bf \roc-morphisms}.
 \item
 A {\bf refined \roc-semigroup} is an  \roc-set $(X,\top)$ together with a partial product map
 \begin{align*}
  \mu:\top & \mapsto X \\
     (x,y) & \mapsto x\,y
 \end{align*}
 such that:
 \begin{itemize}
  \item $(x,y)\in\top\Longrightarrow\left((y,z)\in\top\Leftrightarrow(x\, y,z)\in\top,~\forall z\in X\right)$,
  \item $(y,z)\in\top\Longrightarrow\left((x,y)\in\top\Leftrightarrow(x,y\,z)\in\top,~\forall x\in X\right)$,
  \item For any $(x,y)\in\top$ and $(y,z)\in\top$ we have $(x\, y)\, z=x\,(y\,z)$.
 \end{itemize}
 Let us denote by {\bf RRSg} the category of refined \roc-semigroups whose morphisms are \roc-morphisms.
 \end{enumerate}
\end{defn}
\begin{rk} (\cite[Proposition 3.3]{Zh}) Every strong \roc-semigroup is clearly an \roc-semigroup, but the converse does not hold. See e.g. \cite[Counterexample 3.4]{Zh} and the subsequent paragraph.
\end{rk}
The following definition is taken from \cite{Sch}. See also \cite[Definition 2.20]{Zh}.
\begin{defn}\label{defn:SemigroupSch}
 A {\bf partial semigroup} is an \roc-set $(X,\top)$ together with a partial product map
 \begin{align*}
  \mu:\top & \mapsto X \\
     (x,y) & \mapsto x\,y
 \end{align*}
 such that for any $x,y,z\in X$
 \begin{equation}\label{eq:Passoc}
  \left((x,y)\in\top\wedge (x\, y,z)\in\top \right)\Leftrightarrow ((y,z)\in\top\wedge (x,y\,z) \in\top)
 \end{equation}
 in which case $(x\,y)\, z=x\,(y\, z)$ also holds. Let us denote by {\bf PSg} the category of partial semigroups whose morphisms are   {\bf \roc-morphisms}.
\end{defn}

The notion of partial semigroup relates to a  particular instance of the 
{\bf selective category} of Li-Bland and Weinstein introduced in \cite[Definition 2.1]{LW}, whose definition we now recall.

\begin{defn}  A {\bf selective category }
is a category ${\mathcal C}$ whose set of morphisms (resp. objects)  we denote by ${\rm Mor} $  (resp. ${\rm Ob} $) together with a distinguished class ${\mathcal S}\subset{\rm Mor}$ of morphisms, called
{\bf suave}, and a class $\top_ {\mathcal S}\subset {\mathcal S} \times {\mathcal S} $ of pairs of suave morphisms called {\bf congenial pairs}, such that:
\begin{enumerate}
\item Any identity morphism is suave so ${\rm Id}_x  $  is suave for any $x\in {\rm Ob} $ which we write for short ${\rm Id}\subset {\mathcal S}$;
\label{it:selcat1}
\item If {$f:X\longrightarrow Y$ is suave, $({\rm Id}_Y, f)$ and $(f, {\rm Id}_X)$} are congenial;
\label{it:selcat2}
\item  If $f$ is a suave isomorphism, its inverse $f^{-1}$ is suave as well, and the pairs $(f,f ^{-1})$  and
$(f^{-1},f) $ are both congenial;
\label{it:selcat3}
\item  If $f$ and $g$ are suave and
$(f,g)$  is congenial, then $f\circ g$ is suave, i.e., the composition is a map $\circ:  \top_ {\mathcal S}\longrightarrow {\mathcal S} $;
\label{it:selcat4}
\item  If $f,g,h\in \mathcal{S}$,  then \[\left((f,g) \in \top_{\mathcal S}\wedge (f\circ g,h)\in \top_ {\mathcal S}\right) \Longleftrightarrow \left((g,h) \in \top_ {\mathcal S}\wedge (f, g\circ h)\in \top_ {\mathcal S}\right) ,\] in which case    $(f,g,h)$ is called a {\bf congenial triple}.
\label{it:selcat5}
\end{enumerate}
A {\bf selective  functor} between selective categories is one which takes congenial pairs to congenial pairs.
\end{defn}

Recall that a category $\mathcal{C}=({\rm Obj}(\calc),{\rm Mor}(\calc))$ is {\bf small} if ${\rm Obj}(\calc)$ and ${\rm Mor}(\calc))$ are sets and not proper classes.

\begin{prop}\label{lem:selcatLWA} A small selective category with one object reduces to a   partial  semigroup $(S, \top\subset S\times S, m)$  built from a nonempty subset $S\subset M$ of a  monoid $(M, \mu)$ with unit $1$ such that,
\begin{enumerate}
\item  $1\in S$, $1\top S$ and $S\top 1$;
\item  $(S,\top)$ is stable under taking inverse (in $M$) in the following sense: if $s\in S$  is invertible in $M$, then its inverse $s^{-1}$ is in $S$ and $(s, s^{-1}), (s^{-1},s)$ are in $\top$.
\end{enumerate}
A selective morphism
between selective categories with one object reduces to  \roc- morphisms of partial semigroups that preserve the identity (and hence inverses).
\end{prop}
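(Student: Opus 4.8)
The plan is to unwind the definition of a selective category in the special case where the category $\calc$ has a single object, say $\ast$. First I would observe that, since there is only one object, the set $\Hom(\ast,\ast)$ of all endomorphisms is a monoid $(M,\mu)$ under composition, with unit $1={\rm Id}_\ast$; every morphism has the same source and target, so composability is never an issue at the categorical level and $\mu$ is a genuine (total) associative product. Next I would set $S:={\mathcal S}\subset M$, the class of suave morphisms, and $\top:=\top_{\mathcal S}\subset S\times S$, the congenial pairs, and define the partial product $m:=\mu|_\top$. Smallness of $\calc$ guarantees that $M$, hence $S$ and $\top$, are honest sets rather than proper classes, so $(S,\top,m)$ is a legitimate candidate for a partial semigroup.

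The bulk of the argument is then a line-by-line translation of the five selective-category axioms into the partial-semigroup data. Axiom \eqref{it:selcat1} says ${\rm Id}\subset{\mathcal S}$, i.e. $1\in S$, which is the first half of condition (i). Axiom \eqref{it:selcat2} applied to $f:\ast\longrightarrow\ast$ says that for every $s\in S$ the pairs $(1,s)$ and $(s,1)$ are congenial, i.e. $1\top S$ and $S\top 1$; this is the second half of condition (i). Axiom \eqref{it:selcat3} says precisely that if $s\in S$ is invertible in $M$ then $s^{-1}\in S$ and $(s,s^{-1}),(s^{-1},s)\in\top$, which is condition (ii). Axiom \eqref{it:selcat4} says $\circ\colon\top_{\mathcal S}\longrightarrow{\mathcal S}$, i.e. that $m$ indeed maps $\top$ into $S$, so $m$ is well defined as a partial product on the \roc-set $(S,\top)$. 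Finally, axiom \eqref{it:selcat5} is, verbatim after replacing $\circ$ by $m$ and $\top_{\mathcal S}$ by $\top$, the associativity-compatibility condition \eqref{eq:Passoc} of Definition \ref{defn:SemigroupSch}; and the equality $(x\,y)\,z=x\,(y\,z)$ for congenial triples, asserted implicitly in \eqref{it:selcat5} (or, if one prefers, inherited from associativity of $\mu$ in $M$), gives the remaining clause of a partial semigroup. Conversely, given data $(S\subset M,\top,m)$ as in the statement, one reverses these translations: declare $\calc$ to have one object with $\Hom(\ast,\ast)=M$, take ${\mathcal S}=S$ and $\top_{\mathcal S}=\top$, and check that (i)--(ii) plus the partial-semigroup axioms yield exactly \eqref{it:selcat1}--\eqref{it:selcat5}, the only nontrivial points being that \eqref{it:selcat2} needs $1\top S$ and $S\top 1$, and \eqref{it:selcat4} needs $m(\top)\subseteq S$.

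For the morphism statement, a selective functor between one-object selective categories is a monoid homomorphism $F\colon M\longrightarrow M'$ with $F(1)=1$ (functoriality forces preservation of the identity) that sends congenial pairs to congenial pairs, i.e. $(F\times F)(\top)\subseteq\top'$, and hence satisfies $F(m(x,y))=m'(F(x),F(y))$ for $(x,y)\in\top$ because $F$ respects $\mu$; this is exactly an \roc-morphism of partial semigroups preserving the unit. Preservation of inverses is then automatic: if $s\in S$ is invertible in $M$ then $F(s)F(s^{-1})=F(1)=1=F(s^{-1})F(s)$, so $F(s^{-1})=F(s)^{-1}$, and by condition (ii) applied in $M'$ this inverse lies in $S'$ with the relevant pairs congenial.

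The main obstacle, such as it is, is bookkeeping rather than mathematics: one must be careful that the partial product $m$ is defined \emph{only} on $\top$, so every appeal to associativity must be routed either through the total product $\mu$ of the ambient monoid $M$ or through axiom \eqref{it:selcat5}/condition \eqref{eq:Passoc}, and one must verify that congeniality of a pair is genuinely extra data not forced by mere composability (it is, since in a one-object category every pair is composable). A secondary point worth stating explicitly is the role of smallness: without it ${\mathcal S}$ and $\top_{\mathcal S}$ could be proper classes and the phrase ``partial semigroup'' — which presupposes an \roc-set, hence an honest set — would not literally apply. Once these two caveats are made, the proof is a direct dictionary between the two sets of axioms. \bbox
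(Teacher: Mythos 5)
Your proof is correct and follows essentially the same route as the paper's: identify the endomorphism monoid of the single object, take $S$ to be the suave morphisms and $\top$ the congenial pairs, and match axioms (\emph{i})--(\emph{iii}) to the two listed conditions and (\emph{iv})--(\emph{v}) to the partial-semigroup axioms. The paper states this dictionary in two sentences; you carry it out axiom by axiom (and add the converse and the inverse-preservation remark), but there is no difference in substance.
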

\begin{proof}
With exactly one object, a small category $\mathcal{C}=({\rm Obj}(\calc),{\rm Mor}(\calc))$ boils down to a monoid $M:={\rm  Mor}(\calc)$,  its distinguished class ${\mathcal S}{\rm Mor}$ of suave morphisms boils down to a subset $S\subset M$, and the class of congenial pairs of suave elements boils down to a subset $\top\subset S\times S$. Further conditions~(\ref{it:selcat1}) -- (\ref{it:selcat3}) of a selective category boil down to the two conditions in the lemma, while conditions~(\ref{it:selcat4}) -- (\ref{it:selcat5}) boil down to the condition that $(S,\top)$ is a partial semigroup.

Finally a selective  functor $f: (S_1, \top_1)\to (S_2, \top_2)$
between selective categories $(S_i, \top_i)$  with one object boils down to a \roc-morphism of partial semigroups that preserve the identity.
\end{proof}

\begin{rk}
     We need the category $\calc$ to be small, as even a category with only one object can be large. For example, take $\calc$ the category whose only object {\bf Set} is the category of sets, and whose morphisms are the endofunctors
     of {\bf Set}. In this example, Mor$(\calc)$ has no  monoid structure as it is not a set.
    \end{rk}

\subsection{Relating  various partial structures}

We quote from \cite{Zh} with the reference to the statements. We start with some general comparisons:
\begin{itemize}
 \item {\bf RRSg} $\subsetneq$ {\bf SRSg} \cite[Example 4.3]{Zh}.
 \item {\bf SRSg} $\subsetneq$ {\bf RSg} \cite[Proposition 3.3 and Counterexample 3.4]{Zh}.
 \item {\bf SRSg} $\subsetneq$ {\bf PSg} \cite[Example 3.6]{Zh}.
 \item {\bf SRSg} $\subsetneq$ {\bf RSg} $\cap$ {\bf PSg} \cite[Proposition 3.7]{Zh}.
\end{itemize}
There are examples of ${\mathcal R}$-semigroups  that are not partial semigroups and vice-versa:
\begin{itemize}
 \item {\bf RSg} $\not\subseteq$ {\bf PSg} \cite[Example 3.8]{Zh}.
 \item {\bf PSg} $\not\subseteq$ {\bf RSg} \cite[Example 3.10]{Zh}.
\end{itemize}

Note that the last two conditions mean that {\bf RSg}$\cap${\bf PSg} $\subsetneq${\bf PSg} and {\bf RSg}$\cap${\bf PSg}$\subsetneq${\bf RSg}. Thus in summary, we have strict inclusions shown by the following Hasse diagram.
$$\xymatrix{
\mathbf{RSg} \ar@{-}[rd] && \mathbf{PSg} \ar@{-}[ld]\\
& \mathbf{RSg}\cap \mathbf{PSg} \ar@{-}[d] & \\
& \mathbf{SRSg} \ar@{-}[d] & \\
& \mathbf{RRSg} & }
$$

Here are  examples of \loc sets with a partial product   which   fulfills  the following equivalence relation:
\[(x\top y ~\wedge~ (x\, y)\top z)\Longleftrightarrow (x\top y ~\wedge~y\top z~\wedge~ x\top z) \Longleftrightarrow  (y\top z  ~\wedge~ x \top (y\, z)),\] namely,
 conditions  (\ref{eq:muxl}),   (\ref{eq:muxr})  {(which are equivalent for \loc semigroups)} are equivalent to (\ref{eq:Passoc}). So they are  both \loc and partial  semigroups.

\begin{ex}
\begin{enumerate}
  \item The set $\N$ of natural numbers equipped with the  coprime relation $n\top m\Leftrightarrow n\wedge m=1$  and the usual product of real numbers is  a partial semigroup since
  \[a\wedge b=1 \, {\rm and} \, a\,b\wedge c=1\Longleftrightarrow a\wedge b=1\, {\rm and} \,a\wedge c=1\, {\rm and}\, b\wedge c=1\Longleftrightarrow c\wedge b=1  \, {\rm and} \, a\wedge b\,c=1,\] and a \loc semigroup since   \[a\wedge c=1 \, {\rm and} \, b\wedge c=1\Longrightarrow a\,b\wedge c=1.\]
     \item The power set ${\mathcal P}(X)$ of a set $X$ equipped with the disjointness relation $A\top B\Leftrightarrow A\cap B=\emptyset$ and the product law given by the union $\cup$ is a partial semigroup and we have
    \begin{eqnarray*}
A\cap B=\emptyset \, \wedge \, (A\cup B)\cap C=\emptyset &\Longleftrightarrow& A\cap B=\emptyset \, \wedge \,  A\cap C=\emptyset  \, \wedge \, B\cap C=\emptyset\\
&\Longleftrightarrow &  B\cap C=\emptyset \, \wedge \,A\cap (B\cup C) =\emptyset.\end{eqnarray*} It is also a \loc semigroup since \[A\cap C=\emptyset \wedge  B\cap C=\emptyset \Longrightarrow (A\cup B)\cap C=\emptyset.\]
 \end{enumerate}
\end{ex}

\subsection{Transitive partial structures}

Here is a  useful property of partial structures.

\begin{defn}
 A \loc set $(X,\top)$ is called {\bf transitive} if the relation $\top$ is transitive, namely if for any $a,b,c\in X$
 \begin{equation*}
 \left( (a,b)\in\top\wedge(b,c)\in\top\right)\Longrightarrow (a,c)\in\top.
 \end{equation*}
 A partial structure $(X,\top,\mu)$ such that $(X,\top)$ is transitive is called a {\bf transitive partial structure}. We write {\bf tLSg} (resp.
 {\bf tSLSg, tRSg, tPSg}) for the category of transitive \loc semigroups (resp. transitive strong \loc semigroups, transitive refined \loc semigroups,
 transitive partial semigroups).
\end{defn}
\begin{rk} Transitive partial structures are not relevant in the context of locality understood in the sense of quantum field theory, since we do not expect the event $A$ to be independent of the event $C$ under the assumption
that  the  event $A$ is independent of the  event $B$ and  the event $B$ is independent of the event $C$. In fact, a transitive \loc structure $\top$ is {\em almost} reflexive, in that for every event $a$, if there exists $b$ such that $b\top a$, then $a$ is independent of itself.
\end{rk}

We saw that \loc semigroups and partial semigroups are distinct structures. However, we have the following result:
\begin{prop}\cite[Proposition 3.9]{Zh}
 {\bf tLSg} $\subsetneq$ {\bf tPSg}.
\end{prop}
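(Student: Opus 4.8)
The plan is to show two things: first, that every transitive \loc semigroup is a transitive partial semigroup (the inclusion $\mathbf{tLSg}\subseteq\mathbf{tPSg}$), and second, that this inclusion is strict by exhibiting a transitive partial semigroup which is not a transitive \loc semigroup. For the inclusion, let $(X,\top,\mu)$ be a transitive \loc semigroup. I must verify the partial-semigroup axiom~(\ref{eq:Passoc}), i.e.\ $\bigl((x,y)\in\top\wedge(x\,y,z)\in\top\bigr)\Leftrightarrow\bigl((y,z)\in\top\wedge(x,y\,z)\in\top\bigr)$ together with associativity. The key observation is that in a \emph{transitive} \loc set, the hypothesis $x\top y$ forces $x$ and $y$ to have exactly the same $\top$-partners, up to the relation with each other: indeed if $x\top y$ and $y\top z$ then $x\top z$ by transitivity, and symmetrically. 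So I first prove the auxiliary claim that, under transitivity, for $(x,y)\in\top$ one has $y\top z\Rightarrow x\top z$ and $x\top z\Rightarrow y\top z$ for all $z$, i.e.\ $x^\top\cup\{y\}\supseteq y^\top$ and vice versa in the relevant sense. Granting this, the left-to-right direction of~(\ref{eq:Passoc}) goes: assume $x\top y$ and $(x\,y)\top z$; I want $y\top z$ and $x\top (y\,z)$. To get $y\top z$ I apply~(\ref{eq:muxl}) backwards — more precisely, I use that $(x\,y)\top z$ together with the locality axiom~(\ref{eq:muUr}) applied to $U=\{z\}$: since $x\,y\in {}^\top\{z\}$ and I need to extract that the factors lie in ${}^\top\{z\}$; here is where transitivity is essential, because in general a product being orthogonal to $z$ does not imply the factors are. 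With transitivity, from $x\top y$ and $(x\,y)\top z$ I derive $x\top z$ (because $x\top y$ and $y\top(xy)$... ) — this is the delicate bookkeeping step and I will have to arrange the auxiliary claim so it delivers exactly $x\top z$ and $y\top z$ from $x\top y$ and $xy\top z$; then $x\top(y\,z)$ follows from~(\ref{eq:muxr}) with the roles $z\rightsquigarrow x$, and symmetrically for the reverse implication. Associativity $(x\,y)\,z=x\,(y\,z)$ is then just axiom~(iii) of a \loc semigroup, since we have shown all three pairs lie in $\top$.

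For strictness I would borrow, or adapt, \cite[Example 3.10]{Zh} (already cited in the excerpt as witnessing $\mathbf{PSg}\not\subseteq\mathbf{RSg}$), checking that the example there is transitive, or otherwise modify it: the natural candidate is a partial semigroup built on a set with a transitive relation in which axiom~(\ref{eq:muxl}) fails, e.g.\ a three-element set with a relation $\top$ that is transitive (hence "almost reflexive" in the sense of the Remark above) but for which $x\top z$, $y\top z$, $x\top y$ does \emph{not} imply $(x\,y)\top z$. A clean way is to take $X$ with a transitive symmetric relation coming from a partition, define $\mu$ so that the product of two $\top$-related elements "jumps out" of the expected polar set; verifying~(\ref{eq:Passoc}) is a short finite check, and producing one triple violating~(\ref{eq:muxl}) shows it is not in $\mathbf{tLSg}$.

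The main obstacle is the left-to-right direction of~(\ref{eq:Passoc}): deducing, from $x\top y$ and $(x\,y)\top z$ alone, that the individual factor $x$ (and $y$) is $\top$-related to $z$. In a general \loc semigroup this is false, so the entire argument hinges on exploiting transitivity at exactly the right moment, and on having the auxiliary "shared-partners" claim stated in a form strong enough to cover both $x\top z$ and $y\top z$. Once that lemma is in place the rest — invoking~(\ref{eq:muxl})/(\ref{eq:muxr}) for the forward inclusions and axiom~(iii) for associativity — is routine, and the strictness is a finite counterexample. Since \cite[Proposition 3.9]{Zh} is cited as the source, I would in practice simply reproduce that proof; the above is the structure it must have.
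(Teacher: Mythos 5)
The paper offers no proof of this proposition at all: it defers entirely to \cite[Proposition 3.9]{Zh} for the inclusion and to \cite[Example 3.10]{Zh} for strictness, so your reconstruction is the only actual argument on the table, and it is essentially correct. The one step you leave with an ellipsis --- justifying $y\top (x\,y)$ and $x\top(x\,y)$ so that transitivity can transport $(x\,y)\top z$ down to the factors --- does go through, and here is the missing bookkeeping. From $x\top y$, symmetry gives $y\top x$ and transitivity then gives $x\top x$ and $y\top y$ (the ``almost reflexive'' phenomenon already noted in the Remark preceding the proposition). Applying (\ref{eq:muxl}) with $z$ replaced by $x$ (its hypotheses $x\top x$, $y\top x$, $x\top y$ all hold) yields $(x\,y)\top x$, and with $z$ replaced by $y$ yields $(x\,y)\top y$. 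Transitivity applied to $x\top(x\,y)$ and $(x\,y)\top z$ now gives $x\top z$, and likewise $y\top z$; then (\ref{eq:muxr}) with the roles permuted (namely $x\top y$, $x\top z$, $y\top z$ imply $x\top(y\,z)$) gives the remaining pair, the converse implication of (\ref{eq:Passoc}) is symmetric, and associativity is exactly axiom \emph{(iii)} of an \roc-semigroup since all three pairs $(x,y)$, $(y,z)$, $(x,z)$ have been shown to lie in $\top$. Note that your argument never uses $(\ref{eq:muUr})$ ``backwards'' as you first suggest --- that direction is indeed unavailable --- but only forwards together with transitivity, which is the correct mechanism. For strictness you propose precisely what the paper does, namely to invoke \cite[Example 3.10]{Zh} and check it is transitive; constructing a fresh finite counterexample as you sketch would also be acceptable but is not needed. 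So your proof is sound once the auxiliary claim is stated as ``$x\top y$ implies $(x\,y)\top x$ and $(x\,y)\top y$'' rather than only as a shared-partners statement.
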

The statement of \cite{Zh} involves a non-strict inclusion $\subseteq$, yet  \cite[Example 3.10]{Zh} gives a transitive partial semigroup which is not a \loc semigroup.

\section{\Loc algebras}
\label{sec:localg}
Throughout the paper we choose to work in the framework of \loc structures, partially for simplicity but more so due to the fact the the applications we have in view do not require the more general framework of \roc-structures.

\subsection{Basic definitions} We borrow the subsequent definitions from \cite{CGPZ1}. Among them, locality algebras are fundamental objects in multivariate renormalisation.

\begin{defn}\begin{enumerate}\item
A {\bf \loc vector space} is a vector space $V$ over a field $K$ equipped with a \loc relation $\top$ which is compatible with the linear structure on $V$ in the sense that, for any  subset $X$ of $V$, $X^\top$ is a linear subspace of $V$.
\item A {\bf \loc   monoid} is a \loc   semigroup $(G,\top, m_G)$ together with a {\bf unit element} $1_G\in G$ given by the defining property
\[\{1_G\}^\top=G\quad \text{ and }\quad m_G(x, 1_G)= m_G(1_G,x)=x\quad \tforall  x\in G.\]
\item     A {\bf (resp. unital) \loc  algebra} $(A,\top,+,\cdot, m_A)$ (resp. $(A,\top,+,\cdot, m_A, 1_A)$) over $K$ is a \loc vector space $(A,+,\cdot,\top)$ over $K$ together with a \loc bilinear map
	$$ m_A: A\times_\top A \to A$$ such that
	$(A,\top, m_A)$ is a \loc semigroup (resp. a \loc monoid  with   {\bf unit} $1_A\in A$).
$(A,\top)$ is called {\bf commutative} if $(A,\top,m_A)$ is a commutative \loc semigroup.
\item A {\bf sub-\loc algebra} of a \loc algebra $(A,\top,m_A)$ is a linear subspace $B$ of $A$ such that with respect to the \loc condition $\top_B:=(B\times B)\cap \top$ of $\top$ and the partial product $m_B:=m_A|_{\top_B}$ on $B$, $(B,\top_B,m_B)$ is a \loc algebra.
\item
A sub-\loc algebra $I$ of a \loc commutative algebra $\left(A,\top ,m_A \right) $ is called a {\bf \loc ideal} of $A$ if  {$m_A(I^\top,I)\subseteq I$; i.e. if} for any $b\in I$ we have
$m_A(c, b)\in I$ for all $c\in\{b\}^\top$.
\end{enumerate}
\label{defn:localgebra}
\end{defn}

\begin{ex}\label{ex:Rinfty}A pivotal example is the \loc vector space $\left(\R ^\infty, \perp^Q\right)$,  where $\R ^\infty=\bigcup_{k\geq 1} \R ^k$ and $Q=(Q_k(\cdot, \cdot))_{k\geq 1}$  is the inner product on $\R^\infty$  defined by  the inner products on $\R^k$
$$ Q_k(\cdot,\cdot): \R ^k\otimes \R ^k \to \R, \quad k\geq 1,$$ such that $Q_{k+1}|_{\R^k\ot \R ^k}=Q_k$. The inner product induces a \loc relation on $\R ^\infty$
$$u\perp ^Q v \Leftrightarrow Q(u,v)=0,
$$
which makes $\left(\R ^\infty, \perp^Q\right)$ a \loc vector space.

The inner product also induces a locality set structure on the set of subspaces of $\R ^\infty$:
$$U\perp ^Q V \Leftrightarrow Q(u,v)=0, \forall u\in U, v\in V.
$$
\end{ex}

\subsection {The locality algebra of meromorphic germs with linear poles}\label{subsec:meroloc}
Recall that for the filtered Euclidean space $\left(\R ^\infty, Q\right)$
from the standard embeddings $i_n: \R ^n \to \R ^{n+1}$,
the inner product $Q$ induces an isomorphism
$$Q_n^*: \R ^n \to (\R ^n)^*.
$$
So
$$j_{n+1}:=(Q_n^*)^{-1} \circ i_n^*\circ Q_{n+1}^*: \R ^{n+1} \to \R ^n,
$$
induce a direct system
$$j_{n+1}^*: \calm (\R ^n \otimes \C)\to \calm (\R ^{n+1}\otimes \C),
$$
and we set
\begin{equation}
\calm :=\calm (\bbC ^\infty):=\varinjlim _n \calm (\bbC ^n)=\varinjlim _n \calm (\bbR ^n\otimes \bbC)
\label{eq:calm}
\end{equation}
to be the algebra of multivariate meromorphic  germs  with linear poles and real coefficients ~\mcite{GPZ1,GPZ2}.

The \loc structure on $\left(\R ^\infty, \perp^Q\right)$  induces a locality structure on $\calm $. For $f\in \calm (\bbC ^n) $, let $\supp(f)$ denote the {\bf dependence space} of $f$, defined as the smallest subspace of $(\C^n)^*$ spanned by the linear forms on which $f$  depends in the sense of \cite[Definitions 2.9 and 2.13]{CGPZ1}.

\begin{prop}\label{prop:MCperp} \cite[{Proposition 3.9}]{CGPZ1} Equipped with the \loc  relation
\[f_1\perp^Q f_2\Longleftrightarrow {\rm Dep}(f_1)\perp^Q{\rm Dep}(f_2),\]
and the ordinary product of functions restricted on the graph of the locality relation, the \loc set $\left(\calm, \perp^Q\right)$ carries a \loc algebra structure.
\end{prop}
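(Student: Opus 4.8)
The plan is to verify directly that $\left(\calm, \perp^Q, \cdot\right)$ satisfies all the axioms of a commutative \loc algebra in the sense of Definition~\ref{defn:localgebra}, treating the \loc vector space structure, the compatibility of the product with $\perp^Q$, and the semigroup axioms in turn. The whole argument hinges on the behaviour of the dependence space ${\rm Dep}$ under the two relevant operations: taking linear combinations and taking products. The key algebraic fact I would isolate first is that for $f_1, f_2 \in \calm$ one has ${\rm Dep}(f_1 + f_2) \subseteq {\rm Dep}(f_1) + {\rm Dep}(f_2)$ and ${\rm Dep}(f_1 \cdot f_2) \subseteq {\rm Dep}(f_1) + {\rm Dep}(f_2)$, with the second inclusion holding whenever the product makes sense; both follow from the definition of the dependence space as the smallest subspace of linear forms on which the germ depends (\cite[Definitions 2.9 and 2.13]{CGPZ1}), together with the observation that the direct system defining $\calm$ in~(\ref{eq:calm}) is compatible with these notions.

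First I would check that $\left(\calm, \perp^Q\right)$ is a \loc vector space: given a subset $X \subseteq \calm$, one must show $X^\top = \{ g \in \calm : {\rm Dep}(g) \perp^Q {\rm Dep}(f) \ \forall f \in X \}$ is a linear subspace. Since $\perp^Q$ on subspaces of $(\C^n)^*$ (equivalently of $\R^\infty$ via $Q$) is itself a \loc relation compatible with the linear structure by Example~\ref{ex:Rinfty}, and since ${\rm Dep}(g_1 + g_2) \subseteq {\rm Dep}(g_1) + {\rm Dep}(g_2)$, a sum of two elements orthogonal to all of $X$ is again orthogonal to all of $X$; closure under scalars is immediate as ${\rm Dep}(\lambda g) \subseteq {\rm Dep}(g)$. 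Next, for the \loc bilinearity of $m_A$, one restricts to a pair $(f_1, f_2) \in \top$ and a third germ $g$ orthogonal to both; linearity of $f \mapsto f_1 \cdot f$ on the slice $\{f_1\}^\top$ is just ordinary bilinearity of multiplication, and one checks the product stays in the correct polar set using the product inclusion for ${\rm Dep}$.

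The substantive point is that $(\calm, \perp^Q, \cdot)$ is a \loc semigroup, i.e. conditions~(\ref{eq:muUl})--(\ref{eq:muUr}) (equivalently~(\ref{eq:muxl})--(\ref{eq:muxr})) plus associativity. Associativity on triples that are pairwise $\perp^Q$ is inherited from the ordinary associativity of multiplication of meromorphic germs, so there is nothing to do beyond noting that all intermediate products are well-defined (which follows once the polar-set conditions are in place). For~(\ref{eq:muxl}): suppose $f_1 \perp^Q f_3$, $f_2 \perp^Q f_3$ and $f_1 \perp^Q f_2$; then ${\rm Dep}(f_1 \cdot f_2) \subseteq {\rm Dep}(f_1) + {\rm Dep}(f_2)$, and since each summand is $\perp^Q$ to ${\rm Dep}(f_3)$ and $\perp^Q$ on subspaces of $\R^\infty$ respects sums (again Example~\ref{ex:Rinfty}), we get ${\rm Dep}(f_1 \cdot f_2) \perp^Q {\rm Dep}(f_3)$, i.e. $(f_1 \cdot f_2) \perp^Q f_3$. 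By symmetry of $\perp^Q$ this also gives~(\ref{eq:muxr}). Commutativity is inherited from the ordinary product.

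I expect the main obstacle — the one point that is not purely formal — to be the clean statement and justification of the inclusion ${\rm Dep}(f_1 \cdot f_2) \subseteq {\rm Dep}(f_1) + {\rm Dep}(f_2)$ for germs with linear poles, and in particular checking that it is compatible with the colimit identifications in~(\ref{eq:calm}) so that $\perp^Q$ on $\calm$ is well-defined independently of the representative $\calm(\C^n)$ chosen. Everything else reduces to transporting, via the isomorphisms $Q_n^*$, the already-established fact from Example~\ref{ex:Rinfty} that orthogonality of subspaces of $\R^\infty$ is a linear-structure-compatible \loc relation. Since the statement is quoted as \cite[Proposition 3.9]{CGPZ1}, the proof here can legitimately be a short verification citing the dependence-space calculus of \cite{CGPZ1} for the one nontrivial inclusion.
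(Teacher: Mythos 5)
Your argument is correct and is essentially the standard one: the paper itself gives no proof of this statement, citing \cite{CGPZ1} instead, and the verification there rests on exactly the dependence-space calculus you isolate, namely ${\rm Dep}(f_1+f_2)\subseteq {\rm Dep}(f_1)+{\rm Dep}(f_2)$ and ${\rm Dep}(f_1\cdot f_2)\subseteq {\rm Dep}(f_1)+{\rm Dep}(f_2)$ together with the fact that orthogonality of subspaces of $(\R^\infty,Q)$ is stable under sums. You also correctly flag the only genuinely non-formal points (the product inclusion for ${\rm Dep}$ and its compatibility with the direct limit defining $\calm$), so deferring those to \cite[Definitions 2.9 and 2.13]{CGPZ1} is legitimate.
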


The inner product $Q$ induces a decomposition of $\calm$ \cite {GPZ2}
\begin{equation}
\calm =\calm _+\oplus \calm ^Q_-,
\label{eq:decmero}
\end{equation}
where $\calm _+$ is the subspace of holomorphic germs and $\calm ^Q_-$ is the subspace generated by polar germs.

\begin {prop} {\cite[Proposition 3.19]{CGPZ1}} The subspace $\calm _+$ is a subalgebra and sub-locality algebra of $\calm$. The subspace $\calm ^Q _-$ is not a subalgebra but a locality ideal of $\calm$.
\mlabel {prop:m-}
\end{prop}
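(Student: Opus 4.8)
The plan is to verify each of the four claims in the proposition—that $\calm_+$ is a subalgebra, that $\calm_+$ is a sub-locality algebra, that $\calm_-^Q$ is not a subalgebra, and that $\calm_-^Q$ is a locality ideal—using the characterisation of $\calm_+$ and $\calm_-^Q$ via dependence spaces together with the behaviour of these spaces under products. The key structural input is that for $f_1, f_2 \in \calm$ one has $\supp(f_1 f_2) \subseteq \supp(f_1) + \supp(f_2)$ (when the product is defined), that a germ is holomorphic iff it has trivial "polar part" in the chosen $Q$-splitting, and that the decomposition \eqref{eq:decmero} is compatible with the filtration defining $\calm$.

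First I would treat $\calm_+$. Since the ordinary product of two germs holomorphic at $0$ is again holomorphic at $0$, and $\calm_+$ is manifestly a linear subspace containing $1$, it is a subalgebra; this is essentially immediate. For the sub-locality part, by Definition~\ref{defn:localgebra}(iv) I must check that with $\top_{\calm_+} = (\calm_+\times\calm_+)\cap\perp^Q$ and the restricted product, $(\calm_+, \top_{\calm_+})$ is a locality algebra. The locality relation on $\calm$ is defined purely through dependence spaces and the orthogonality relation $\perp^Q$ on $\R^\infty$, and the locality bilinearity plus the semigroup axioms \eqref{eq:muxl}--\eqref{eq:muxr} are inherited from $\calm$ once one knows $\calm_+$ is closed under the (locality) product, which we have just established; so this reduces to a routine restriction argument, and I would simply invoke Proposition~\ref{prop:MCperp}.

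Next, $\calm_-^Q$ being a locality ideal: it is a linear subspace, and I need $m_A((\calm_-^Q)^\top, \calm_-^Q) \subseteq \calm_-^Q$, i.e. if $g \in \calm_-^Q$ and $h \perp^Q g$, then $hg \in \calm_-^Q$. Here the orthogonality $h\perp^Q g$ forces $\supp(h) \perp^Q \supp(g)$, so the linear forms appearing in $h$ and in $g$ live in mutually $Q$-orthogonal subspaces; I would argue that in a suitable coordinate splitting adapted to these orthogonal subspaces, $h$ depends only on one block of variables and $g$ on another, and that $g$, being polar in its block, makes the product polar. Concretely, I would reduce to the defining local picture: write things over a finite-dimensional $\R^n$ containing both dependence spaces, choose $Q$-orthogonal complementary coordinates, and observe that the $Q$-projection $\pi_+^Q$ of $hg$ vanishes because $\pi_+^Q$ applied to a product of germs in orthogonal variable-blocks factorises (this is exactly the multiplicativity-on-orthogonal-pairs property of $\pi_+^Q$ asserted elsewhere in the paper). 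That $\calm_-^Q$ is itself a locality subalgebra (needed for the definition of locality ideal) follows from the same factorisation: a product of a polar germ and anything in orthogonal variables stays polar. Finally, that $\calm_-^Q$ is \emph{not} a subalgebra of $\calm$ is shown by a one-line counterexample in a single variable, e.g. $\vep^{-1}\cdot\vep^{-1} = \vep^{-2}$ is polar—fine—but rather one needs a product landing outside $\calm_-^Q$: take $f = \vep^{-1} \in \calm_-^Q$ with $Q$ the standard inner product; then $f\cdot f = \vep^{-2}\in\calm_-^Q$ still, so instead use $f_1 = (\vep_1+\vep_2)^{-1}$, $f_2 = \vep_1$, whose product $\vep_1(\vep_1+\vep_2)^{-1} = 1 - \vep_2(\vep_1+\vep_2)^{-1}$ has a nonzero holomorphic part, exhibiting failure of closure (and note $f_2\notin\calm_-^Q$, so one rather picks two genuinely polar germs whose product has nonzero $\calm_+$-component, e.g. $\vep_1^{-1}$ and $\vep_1(\vep_1+\vep_2)^{-1}\notin\calm_-^Q$—so the cleanest is simply to cite \cite[Proposition 3.19]{CGPZ1} for the counterexample).

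The main obstacle is the locality-ideal claim, and specifically making rigorous the factorisation of $\pi_+^Q$ on products of germs supported in $Q$-orthogonal subspaces: one must be careful that the $Q$-adapted coordinate change genuinely separates the variables on which $h$ and $g$ depend, that this is independent of the auxiliary embedding into a finite-dimensional $\R^n$ used to realise the germs, and that "polar in its block of variables" is preserved under multiplication by a germ in the complementary block. Once that separation lemma is in hand—which is really the content of the $Q$-decomposition being compatible with $\perp^Q$, as in \cite{GPZ2}—the rest of the proof is bookkeeping. I would therefore structure the write-up so that the separation/factorisation statement is isolated as the one nontrivial ingredient, with everything else (subspace closure, unit, inherited semigroup axioms, the explicit counterexample) dispatched quickly.
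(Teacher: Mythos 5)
First, note that the survey itself offers no proof of Proposition~\ref{prop:m-}: it is quoted from \cite[Proposition 3.19]{CGPZ1}, so your proposal can only be measured against the argument given there, which indeed proceeds as you outline for the positive claims — via the explicit description of $\calm^Q_-$ as the span of polar germs $h/(L_1^{s_1}\cdots L_n^{s_n})$ with ${\rm Dep}(h)\perp^Q{\rm span}(L_1,\dots,L_n)$, and the observation that multiplying such a germ by an $f$ with ${\rm Dep}(f)\perp^Q{\rm Dep}(g)$ produces a numerator $fh$ still orthogonal to the $L_i$, hence again a polar germ. Your identification of the ``separation lemma'' (that the polar decomposition of $g$ can be carried out inside ${\rm Dep}(g)$, so that orthogonality of $f$ to $g$ really does propagate to each polar summand) as the one nontrivial ingredient is accurate; that is precisely the content supplied by \cite{GPZ2}. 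Two points, however, need repair.

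First, your parenthetical justification of the locality-ideal property via ``the multiplicativity-on-orthogonal-pairs property of $\pi_+^Q$ asserted elsewhere in the paper'' is circular in the paper's logical order: Proposition~\ref{prop:locproj} (that $\pi_+^Q$ is a locality algebra homomorphism) is explicitly \emph{deduced from} the fact that $\calm^Q_-$ is a locality ideal (``Since $\calm^Q_-$ is a locality ideal of $\calm$, we have\dots''). You must therefore rely only on your direct argument with polar germs and drop the appeal to $\pi_+^Q$. Second, you never actually exhibit the counterexample showing $\calm^Q_-$ is not a subalgebra: both of your candidate pairs fail (as you yourself notice, $\vep_1\notin\calm^Q_-$ and $\vep_1(\vep_1+\vep_2)^{-1}\notin\calm^Q_-$), and ending with ``cite \cite{CGPZ1} for the counterexample'' leaves a genuine hole. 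The fix is one line: take $f_1=\vep_2/\vep_1$ and $f_2=\vep_1/\vep_2$. Each is a polar germ for the standard inner product (numerator orthogonal to the span of the denominator's linear form), yet $f_1f_2=1\in\calm_+\setminus\{0\}$, which by directness of the decomposition~(\ref{eq:decmero}) is not in $\calm^Q_-$. With those two corrections the proposal matches the cited proof.
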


There is another locality structure on $\calm $ which is also compatible with the ordinary product of functions.

Let $\{e_n\,|\, n\in \N\}$ denote a $Q$-orthonormal basis of $\R^\infty$. We call the {\bf  support of $f\in \calm$}, denoted ${\rm Supp}(f) $, the smallest subset $J\subset \N$ such that ${\rm Dep}(f)$ is contained in the subspace spanned by $\{e_j^*\,|\, j\in J\}$. We thus equip  $\calm$ with  the \loc relation
\[f_1\top_D^Q f_2\Longleftrightarrow {\rm Supp}(f_1)\cap {\rm Supp}(f_2)=\emptyset,\]
which makes $\calm $ a \loc vector space.

\begin{rk} \label{rk:differences_loc_struct_mero} Since the $K$-linear span of ${\rm Supp}(f)$ contains ${\rm Dep}(f)$, for $f_1, f_2\in \calm$ we have $f_1\top^Q _D f_2\Longrightarrow f_1\perp^{Q} f_2$. Yet $(e_1^*+e_2^*)\perp^{Q} (e_1^*-e_2^*)$  whereas  these two linear forms are not $\top _D$ independent since  ${\rm Supp}(e_1^*+e_2^*)=\{1,2\}={\rm Supp}(e_1^*-e_2^*)$.
\end{rk}
\begin{prop}
The \loc set $\big(\calm, \top _D^Q\big)$ equipped with the product of functions is a \loc  algebra.
\end{prop}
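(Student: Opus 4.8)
The plan is to verify the three defining clauses of a \loc algebra (Definition \ref{defn:localgebra}) for $\big(\calm, \top_D^Q, \cdot\big)$, reusing as much as possible the already-established structure $(\calm, \perp^Q, \cdot)$ from Proposition \ref{prop:MCperp}. The key bookkeeping fact is that $\mathrm{Supp}$ behaves well with respect to multiplication: for $f_1, f_2$ with $f_1 \top_D^Q f_2$, one has $\mathrm{Supp}(f_1 f_2) \subseteq \mathrm{Supp}(f_1)\cup\mathrm{Supp}(f_2)$, since $\mathrm{Dep}(f_1 f_2)\subseteq \mathrm{Dep}(f_1)+\mathrm{Dep}(f_2)$ and each $\mathrm{Dep}(f_i)$ lies in the span of $\{e_j^*: j\in\mathrm{Supp}(f_i)\}$, so $\mathrm{Dep}(f_1 f_2)$ lies in the span of $\{e_j^*: j\in \mathrm{Supp}(f_1)\cup\mathrm{Supp}(f_2)\}$; minimality of $\mathrm{Supp}$ then gives the inclusion. (One should note that $\top_D^Q$ does lie inside the domain where the ordinary product of germs with linear poles is again such a germ, which is immediate since $\mathrm{Supp}$-disjoint germs are in particular defined on a common finite-dimensional space.)

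First I would record that $(\calm, \top_D^Q)$ is a \loc vector space — but this is already asserted in the text just before the statement, so it may simply be quoted. Next, the associativity clause (iii) of a \loc semigroup is inherited for free: whenever $a,b,c$ are pairwise $\top_D^Q$-related they are in particular pairwise defined on a common $\R^n$, and associativity of the ordinary product of meromorphic germs is classical. Then I would check the two polar-set stability conditions; by the Remark following Definition of \roc-semigroup (Eq.~(\ref{eq:muxl}), (\ref{eq:muxr})), and since $\top_D^Q$ is symmetric, it suffices to prove the single implication
\[
\big(x\top_D^Q z \ \wedge\ y\top_D^Q z\ \wedge\ x\top_D^Q y\big)\Longrightarrow (x\,y)\top_D^Q z.
\]
This follows directly from the $\mathrm{Supp}$-subadditivity above: $\mathrm{Supp}(xy)\cap\mathrm{Supp}(z)\subseteq\big(\mathrm{Supp}(x)\cup\mathrm{Supp}(y)\big)\cap\mathrm{Supp}(z) = \emptyset$. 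Finally, $\top_D^Q$-bilinearity of the product: for fixed $z$, the set $\{z\}^{\top_D^Q}$ is a linear subspace (this is part of the \loc vector space statement), and on it the product with $z$ is bilinear because the ordinary product of functions is bilinear wherever it is defined; the germ-level linear combinations do not increase supports beyond the union, so all the relevant terms stay in the domain of $\cdot$.

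The argument is essentially routine once the $\mathrm{Supp}$-subadditivity lemma is in place, so there is no serious obstacle; the only point requiring a little care is confirming that every product and every linear combination appearing stays within the graph of $\top_D^Q$ where the partial product is defined, i.e. that one never multiplies germs whose supports overlap — and this is exactly what the stability conditions (\ref{eq:muxl})--(\ref{eq:muxr}) and the subspace property of $\{z\}^{\top_D^Q}$ guarantee. I would also remark, as already observed in Remark \ref{rk:differences_loc_struct_mero}, that $\top_D^Q$ is strictly finer than $\perp^Q$, so this is genuinely a second, distinct \loc algebra structure on $\calm$ and not a restriction of the previous one.
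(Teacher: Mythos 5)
Your proof is correct and follows essentially the same route as the paper, which disposes of the statement in one line by citing Remark \ref{rk:differences_loc_struct_mero} (the inclusion $\top_D^Q\subseteq\perp^Q$) together with Proposition \ref{prop:MCperp}. You additionally make explicit the one fact that the paper's citation leaves implicit, namely the subadditivity $\mathrm{Supp}(f_1f_2)\subseteq\mathrm{Supp}(f_1)\cup\mathrm{Supp}(f_2)$ needed for the polar-set stability of the finer relation, which is a worthwhile clarification but not a different argument.
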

\begin{proof}
     This follows from Remark \ref{rk:differences_loc_struct_mero} and Proposition \ref{prop:MCperp}.
    \end{proof}

\subsection {\Loc algebra of lattice cones}\label{subsec:latticecones}  In the filtered Euclidean lattice space $\left(\R ^\infty, \Z ^\infty, Q\right)$ such that $Q(u,v)\in \Q$ for $u,v \in \Z ^\infty$, a lattice cone is a pair $(C, \Lambda _C)$ where $C$ is a polyhedral cone in some $\R ^k$ generated by elements in $\Z ^\infty$ and $\Lambda _C$ is a lattice generated by elements in $\Q ^\infty $ in the linear subspace spanned by $C$.
Let $\bfcf _k$ be the set of lattice cones in $\R ^k$ and
 $$\bfcf =\bigcup_{k\geq 1} \bfcf_k  $$
be the set of lattice cones in $(\R ^\infty, \Z ^\infty)$ which is the direct limit under the standard embeddings. Let $\Q \bfc _k $ and $\Q \bfc $ be the linear spans of $\bfcf _k $ and $\bfcf$ over $\Q$.

For two convex lattice cones $(C_i, \Lambda_i)$ we set
\begin{equation}\label{eq:perpcones}(C_1, \Lambda _1)\perp^Q (C_2, \Lambda _2) \Longleftrightarrow {\rm span} (C_1) \perp^Q {\rm span }(C_2).\end{equation}
This defines a locality relation on $\Q \bfc$.

For convex cones $C:=\langle u_1,\cdots,u_m\rangle$ and $D:=\langle v_1,\cdots,v_n\rangle$ spanned by $u_1,\cdots,u_m$ and $v_1,\cdots,v_n$ respectively, their Minkowski sum is the convex cone
\begin{equation*}
C\cdot D:=\langle u_1,\cdots,u_m,v_1,\cdots, v_n\rangle.
\end{equation*}
This operation can be extended to a product in $\Q \bfc$:
\begin{equation}
(C, \Lambda _C)\cdot (D, \Lambda _D):=(C\cdot D, \Lambda _C+\Lambda _D),
\mlabel{eq:minkprod}
\end{equation}
where $\Lambda _C+\Lambda _D$ is the abelian group generated by $\Lambda _C$ and $\Lambda _D$ in $\Q ^\infty$.
This product endows a monoid structure on $\bfc$ with unit $(\{0\}, \{0\})$, which also restricts to a \loc monoid structure on $(\bfcf,\perp ^Q)$.

\begin {prop}  \cite[Lemma 3.18]{CGPZ1}
The \loc set $\left(\Q \bfc, \perp^Q\right) $  equipped with the Minkowski sum
is a (graded) locality algebra.
\end{prop}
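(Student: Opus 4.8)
The plan is to verify the axioms of a (graded, commutative, unital) \loc algebra for $\left(\Q\bfc,\perp^Q\right)$ with the Minkowski sum, reducing as much as possible to the structure already established on lattice cones as a \loc monoid. First I would note that $\Q\bfc$ is by construction a $\Q$-vector space (indeed a graded one, with the grading inherited from the dimension filtration on $\bfcf$), and that the relation $\perp^Q$ on $\Q\bfc$ is obtained by $\Q$-bilinear extension of the relation \eqref{eq:perpcones} on the generating set $\bfcf$; concretely, $\left(\sum a_i C_i\right)\perp^Q\left(\sum b_j D_j\right)$ iff $C_i\perp^Q D_j$ for all $i,j$ with $a_i,b_j\neq 0$. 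With this definition, checking that $X^\top$ is a linear subspace for every $X\subset\Q\bfc$ is routine: if $y,y'\in X^\top$ then every cone appearing in $y$ or in $y'$ is $\perp^Q$ every cone appearing in each element of $X$, hence the same holds for any linear combination $\lambda y+\mu y'$, so $X^\top$ is closed under linear combinations. This makes $\left(\Q\bfc,\perp^Q\right)$ a \loc vector space.

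Next I would check that the Minkowski product \eqref{eq:minkprod}, extended $\Q$-bilinearly, is a well-defined \loc bilinear map $m:\Q\bfc\times_{\perp^Q}\Q\bfc\to\Q\bfc$. The point here is that $\mathrm{span}(C\cdot D)=\mathrm{span}(C)+\mathrm{span}(D)$, so if $(C,\Lambda_C)\perp^Q(E,\Lambda_E)$ and $(D,\Lambda_D)\perp^Q(E,\Lambda_E)$, then $\mathrm{span}(C)+\mathrm{span}(D)\perp^Q\mathrm{span}(E)$, giving $(C\cdot D)\perp^Q(E,\Lambda_E)$; by bilinearity this yields condition \eqref{eq:muxr} (equivalently \eqref{eq:muxl}, since $\perp^Q$ and the product are commutative) on $\Q\bfc$, i.e. $m$ respects polar sets. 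Bilinearity and the grading (the dimension of $\mathrm{span}(C\cdot D)$ is at most $\dim\mathrm{span}(C)+\dim\mathrm{span}(D)$, with equality exactly on orthogonal pairs, which is where the grading is respected) are immediate from the definitions. The unit is $(\{0\},\{0\})$, with $\{(\{0\},\{0\})\}^\top=\Q\bfc$ since $\mathrm{span}(\{0\})=0\perp^Q$ everything.

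The remaining substantive point is associativity of $m$ on $\perp^Q$-related triples, and commutativity. Both are inherited from the monoid structure on $\bfcf$: the excerpt already records that the Minkowski product \eqref{eq:minkprod} endows $\bfcf$ with a monoid structure, so on generators $(C_1\cdot C_2)\cdot C_3=C_1\cdot(C_2\cdot C_3)$ holds unconditionally (both equal $\langle u_1,\dots,v_1,\dots,w_1,\dots\rangle$ with the lattice $\Lambda_1+\Lambda_2+\Lambda_3$), and likewise $C\cdot D=D\cdot C$; extending $\Q$-bilinearly gives full associativity and commutativity on $\Q\bfc$, in particular on $\perp^Q$-related triples as required by Definition \ref{defn:localgebra}. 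Thus $\left(\Q\bfc,\perp^Q,+,\cdot,m\right)$ is a unital commutative graded \loc algebra.

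I do not expect any serious obstacle: the only place requiring genuine geometric input is the identity $\mathrm{span}(C\cdot D)=\mathrm{span}(C)+\mathrm{span}(D)$, which drives both the \loc bilinearity of $m$ and the compatibility of the product with the grading, and this is elementary. The one point to be slightly careful about is the passage from generators to arbitrary elements of $\Q\bfc$: one must check that $\perp^Q$ on linear combinations is characterised coefficient-wise as above and that $m$ is genuinely well-defined modulo the linear relations among lattice cones; but since the relation and the product are both defined by bilinear extension from $\bfcf$, these are formal. Essentially the statement is the assertion that the \loc monoid $(\bfcf,\perp^Q,\cdot)$ linearises to a \loc algebra, and the proof is the standard linearisation argument together with Example \ref{ex:Rinfty} (applied to the spans of cones, which live in the \loc set of subspaces of $\R^\infty$) to see that $\perp^Q$ on $\Q\bfc$ is a \loc relation compatible with the linear structure.
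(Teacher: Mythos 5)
The paper gives no proof of this proposition, deferring entirely to \cite[Lemma 3.18]{CGPZ1}; your argument is the standard linearisation of the \loc monoid $(\bfcf,\perp^Q,\cdot)$ and is correct, with the one genuine geometric input, $\mathrm{span}(C\cdot D)=\mathrm{span}(C)+\mathrm{span}(D)$, correctly identified and used to verify the polar-set condition. The only slip is the parenthetical claim that $\dim\mathrm{span}(C\cdot D)=\dim\mathrm{span}(C)+\dim\mathrm{span}(D)$ holds \emph{exactly} on orthogonal pairs --- it holds on all pairs with transversal spans --- but this does not affect the conclusion, since the grading only needs to be respected on the graph of $\perp^Q$.
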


As with the case for meromorphic germs, there is another subset $\top _D$ of $\perp ^Q$ which also makes $\Q\bfc$ into a locality algebra. Let $\{e_n\,|\, n\in \N\}$ be an orthonormal basis of $\R^\infty$. For a lattice cone $(C, \Lambda _C)$, we denote by ${\rm Supp}(C, \Lambda _C) $ the smallest subset $J$ such that ${\rm span}(C)$ is contained in the subspace spanned by $\{e_j^*\,|\, j\in J\}$ and equip  $\Q \bfc$ with  the \loc relation
\[(C_1, \Lambda _{C_1})\top_D^Q (C_2, \Lambda _{C_2})\Longleftrightarrow {\rm Supp}(C_1, \Lambda _{C_1})\cap {\rm Supp}(C_2, \Lambda _{C_2})=\emptyset,\]
which makes $\Q \bfc $ a \loc vector space.

\begin{prop}
The \loc set $\left(\Q \bfc, \top _D\right)$ equipped with the Minkowski sum is a \loc  algebra.
\end{prop}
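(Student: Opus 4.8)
The plan is to flesh out what, in the parallel germ statement, was the one-line argument ``follows from Remark \ref{rk:differences_loc_struct_mero} and Proposition \ref{prop:MCperp}'': I would mimic, almost verbatim, the proof of the preceding proposition for $\left(\Q\bfc,\perp^Q\right)$ (that is, \cite[Lemma 3.18]{CGPZ1}), replacing the orthogonality bookkeeping by the coarser, purely combinatorial bookkeeping of coordinate supports. The crux is that the Minkowski sum is additive on coordinate supports:
\[
{\rm Supp}\big((C,\Lambda_C)\cdot(D,\Lambda_D)\big)={\rm Supp}(C,\Lambda_C)\cup{\rm Supp}(D,\Lambda_D),
\]
which is immediate from ${\rm span}(C\cdot D)={\rm span}(C)+{\rm span}(D)$ together with the definition of ${\rm Supp}$ as the smallest index set whose associated coordinate subspace contains the span. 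It is also worth recording, as the lattice-cone analogue of Remark \ref{rk:differences_loc_struct_mero}, that $\top_D\subseteq\perp^Q$ on $\Q\bfc$, since disjoint supports force the spans into $Q$-orthogonal coordinate subspaces; in particular $\top_D$ is symmetric. Here $\top_D$, given on the generating set $\bfcf$, is extended to $\Q\bfc$ in the standard way, the support of a linear combination being the union of the supports of the cones occurring in its reduced expansion, so that ${\rm Supp}(x+y)\subseteq{\rm Supp}(x)\cup{\rm Supp}(y)$ and ${\rm Supp}(\lambda x)\subseteq{\rm Supp}(x)$.

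With these at hand, first I would check that $\top_D$ is a \loc relation compatible with the linear structure, i.e. that for every $X\subseteq\Q\bfc$ the polar set $X^{\top_D}=\{y\in\Q\bfc\mid{\rm Supp}(y)\cap{\rm Supp}(x)=\emptyset\ \text{for all}\ x\in X\}$ is a linear subspace: this is exactly the sub-additivity of ${\rm Supp}$ just recorded. Next I would verify that the Minkowski sum (\ref{eq:minkprod}) restricts to a well-defined \loc bilinear map $m\colon\Q\bfc\times_{\top_D}\Q\bfc\to\Q\bfc$: the Minkowski product of two lattice cones is again a lattice cone, so the bilinear extension lands in $\Q\bfc$, and $m$ is bilinear on polar sets precisely because it is the bilinear extension of the product (\ref{eq:minkprod}) on $\bfcf$. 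It then remains to establish the \loc semigroup axioms of Definition \ref{defn:localgebra}. Associativity of $m$ on $\top_D$-admissible triples is automatic, since the Minkowski sum is associative even as a total operation on $\bfcf$. The axioms (\ref{eq:muUl})--(\ref{eq:muUr}), in the equivalent pointwise form (\ref{eq:muxl})--(\ref{eq:muxr}) --- which coincide here because $\top_D$ is symmetric --- assert that $x\top_D z$, $y\top_D z$ and $x\top_D y$ imply $(x\cdot y)\top_D z$; by the support-additivity fact this reduces to $\big({\rm Supp}(x)\cup{\rm Supp}(y)\big)\cap{\rm Supp}(z)=\emptyset$, which follows already from $x\top_D z$ and $y\top_D z$. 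This exhausts Definition \ref{defn:localgebra}, so $\left(\Q\bfc,\top_D,\cdot\right)$ is a \loc algebra --- in fact a graded one, the dimension grading being respected because $\top_D$-independent cones have coordinate-disjoint (hence linearly independent) spans, and even a \loc monoid with unit $(\{0\},\{0\})$, whose support is empty.

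The argument is essentially routine; the one point deserving real care is the passage from the generating set $\bfcf$ to the full span $\Q\bfc$, namely pinning down the convention for ${\rm Supp}$ (hence for $\top_D$) on linear combinations and confirming its compatibility with cancellations. Declaring the support of a linear combination to be the union of the supports of the cones in its reduced expansion is conservative --- it can only enlarge the true dependence data --- which is harmless for producing a \loc structure and is consistent with the extension of a \loc relation from a generating set already used for $\perp^Q$. Everything else is a transcription of \cite[Lemma 3.18]{CGPZ1} with $\perp^Q$ replaced throughout by $\top_D$: each property of $\perp^Q$ invoked there --- symmetry on $\bfcf$, and ${\rm span}(C\cdot D)\perp^Q{\rm span}(E)$ whenever $C\perp^Q E$ and $D\perp^Q E$ --- has its $\top_D$-counterpart as an immediate consequence of the support-additivity of the Minkowski sum.
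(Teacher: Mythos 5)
Your proof is correct and is essentially the argument the paper leaves implicit (the proposition is stated without proof, in parallel with the germ-case analogue whose one-line proof it mirrors): the whole content is the support additivity ${\rm Supp}\big((C,\Lambda_C)\cdot(D,\Lambda_D)\big)={\rm Supp}(C,\Lambda_C)\cup{\rm Supp}(D,\Lambda_D)$, which you identify and use correctly, together with the standard extension of $\top_D$ from the basis $\bfcf$ to $\Q\bfc$. No gaps.
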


\subsection { \Loc algebra of  decorated rooted forests.}
\label{subsec:forestsloc}

Let $(\Omega, \top _\Omega)$ be a locality set.
A {\bf properly $(\Omega,\top_\Omega)$-decorated  rooted forest} is a pair $(F, d)$, where $F$ is a (non-planar) rooted forest and $d: V(F)\to \Omega$ is a map from the set $V(F)$ of vertices of $F$ to $\Omega$ such that
$$v\not =v' \Rightarrow d(v)\top _\Omega d(v').
$$

Let $\calf _{\Omega,\top _\Omega}$ denote the set of properly $(\Omega, \top _\Omega)$-decorated rooted forests and by $K \calf _{\Omega,\top _\Omega}$ its linear span. The set $\calf _{\Omega,\top _\Omega}$ carries a natural locality relation $\top _{\calf _{\Omega,\top _\Omega}}$ from $(\Omega, \top _\Omega)$, and this locality relation induces a locality relation $\top _{\calf _{\Omega,\top _\Omega}}$ on  $K \calf _{\Omega,\top _\Omega}$.

\begin{rk}
 The symmetry of the concatenation of forests  motivates the symmetry of   the binary relation  on the decoration
 set $\Omega$. Planar forests call for the more general non symmetric \roc structure. Given a \roc-set $(\Omega,\top_\Omega)$, one could also define the algebra of {\bf $(\Omega,\top_\Omega)$-decorated planar
 forests} in a similar manner, yet taking care of preserving the order of the concatenation of vertices.
\end{rk}

\begin {prop} \cite[Proposition 1.22]{CGPZ2}
The free module  $K\, {\mathcal F}_{\Omega, \top_\Omega}$  of   properly $(\Omega, \top _\Omega)$-decorated rooted forests  is a  \loc algebra for the concatenation product.
\end{prop}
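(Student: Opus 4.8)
The plan is to build the structure first on the generating set $\calf_{\Omega,\top_\Omega}$ of properly decorated forests and then lift it linearly to $K\calf_{\Omega,\top_\Omega}$. On forests, the concatenation product $m_\calf$ sends a pair $\big((F,d),(F',d')\big)$ to the disjoint union $F\sqcup F'$ carrying the decoration $d$ on $V(F)$ and $d'$ on $V(F')$, and the locality relation induced from $\top_\Omega$, which I abbreviate $\top_\calf$, is
\[(F,d)\,\top_\calf\,(F',d')\ \Longleftrightarrow\ d(v)\,\top_\Omega\,d'(v')\ \text{ for all }v\in V(F),\ v'\in V(F').\]
It is symmetric because $\top_\Omega$ is. The point to settle first is that this partial product is well defined precisely on the graph of $\top_\calf$: if $(F,d)$ and $(F',d')$ are properly decorated and $(F,d)\,\top_\calf\,(F',d')$, then for distinct vertices $w\neq w'$ of $F\sqcup F'$ one has $(d\sqcup d')(w)\,\top_\Omega\,(d\sqcup d')(w')$ — by properness of $(F,d)$ when $w,w'\in V(F)$, by properness of $(F',d')$ when $w,w'\in V(F')$, and by $(F,d)\,\top_\calf\,(F',d')$ when they lie in different (hence vertex‑disjoint) factors — so $F\sqcup F'$ is again properly decorated and $m_\calf$ maps $\top_\calf$ into $\calf_{\Omega,\top_\Omega}$.

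Next I would check that $\big(\calf_{\Omega,\top_\Omega},\top_\calf,m_\calf\big)$ is a commutative \loc monoid. Commutativity and associativity of $\sqcup$ are immediate for non‑planar forests, so whenever every pair among $x,y,z$ lies in $\top_\calf$ both $(xy)z$ and $x(yz)$ equal the triple disjoint union; the empty forest is a unit, and $\{\emptyset\}^{\top_\calf}=\calf_{\Omega,\top_\Omega}$ since it has no vertices. For the polar‑set condition (\ref{eq:muxl})--(\ref{eq:muxr}): if $(F_1,d_1)\,\top_\calf\,(G,e)$, $(F_2,d_2)\,\top_\calf\,(G,e)$ and $(F_1,d_1)\,\top_\calf\,(F_2,d_2)$, then every vertex of $F_1\sqcup F_2$ lies in $V(F_1)$ or in $V(F_2)$ and is therefore $\top_\Omega$‑independent of every vertex of $G$ by the first or the second hypothesis, hence $(F_1,d_1)\cdot(F_2,d_2)\,\top_\calf\,(G,e)$.

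Finally I would transport this to $K\calf_{\Omega,\top_\Omega}$. Equip it with the relation declaring $\sum_i a_i(F_i,d_i)\,\top\,\sum_j b_j(F'_j,d'_j)$ iff $(F_i,d_i)\,\top_\calf\,(F'_j,d'_j)$ whenever $a_ib_j\neq 0$; this is the standard linear extension of a \loc relation from a generating set, and one checks directly that $X^\top$ is then a linear subspace for every $X\subseteq K\calf_{\Omega,\top_\Omega}$, so $(K\calf_{\Omega,\top_\Omega},\top)$ is a \loc vector space. Extending $m_\calf$ bilinearly over the graph of $\top$ yields a \loc bilinear map, and the \loc‑semigroup axioms lift from the basis: associativity lifts by bilinearity, and if $x\top z$, $y\top z$, $x\top y$ then every basis forest surviving in $xy$ is of the form $s_i\cdot t_j$ with $s_i\,\top_\calf\,t_j$, and since $s_i\,\top_\calf\,u_k$, $t_j\,\top_\calf\,u_k$ for every basis forest $u_k$ of $z$, the monoid case gives $s_i\cdot t_j\,\top_\calf\,u_k$, i.e. $xy\,\top\,z$; commutativity lifts the same way. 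Thus $\big(K\calf_{\Omega,\top_\Omega},\top,+,\cdot,m_\calf\big)$ is a \loc algebra.

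The main obstacle — really the only step where the hypotheses on $(\Omega,\top_\Omega)$ and the ``proper decoration'' constraint genuinely interact — is the well‑definedness of $m_\calf$ on the graph of $\top_\calf$ in the first paragraph; everything afterwards is linear‑algebra bookkeeping, and if a general lemma that the linear span of a commutative \loc monoid is a commutative \loc algebra is available, the last paragraph collapses to a citation of it applied to $\big(\calf_{\Omega,\top_\Omega},\top_\calf,m_\calf\big)$.
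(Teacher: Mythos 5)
Your argument is correct and is the standard one: the paper itself gives no proof here (it cites \cite[Proposition 1.22]{CGPZ2}), and the proof there proceeds exactly as you do --- the key verification that concatenation of two independent properly decorated forests is again properly decorated, followed by the routine check of the \loc semigroup axioms on the basis and their linear extension to $K\,\calf_{\Omega,\top_\Omega}$. No gaps.
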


\subsection {Locality algebra of meromorphic germs of symbols} In analysis and geometry, the algebra of polyhomogeneous symbols plays an important role. We are in particular interested in the algebra $\cals (\R _{\ge 0})$ of polyhomogeneous symbols in $\R _{\ge 0}$.

For a  polyhomogeneous symbol $\sigma\in \cals (\R _{\ge 0})$ with asymptotic expansion
$$\sigma \sim \sum_{j=0}^{\infty }a_{ j}\, x^{\alpha-j},$$
the Hadamard finite part at infinity is defined by
\begin{equation}\label{eq:finitepart_const}
 \underset{+\infty}{\rm fp} \sigma:=\sum_{j=0}^\infty a_{j}\, \delta_{\alpha- j, 0},
\end{equation}
(with $\delta_{i,0}$ the Kronecker symbol). Unfortunately, $\underset{+\infty}{\rm fp}$ is not an algebra homomorphism on $\cals (\R _{\ge 0})$.

To make these structures compatible, we introduce the space $\Omega$ of meromorphic germs of symbols on $\R _{\ge 0}$ \cite {CGPZ2}, where germs are around $0$ in the filtered Euclidean space $ \left(\R ^\infty, Q\right)$, and define the locality relation $\perp ^Q$ similar to that of meromorphic functions induced by the inner product $Q$. Then we have

\begin{prop}\label{prop:orderprod} \cite[Proposition 4.15]{CGPZ2} The triple
		$\left(\Omega, \perp^Q, m_{\Omega}\right)$ is a  commutative and unital \loc algebra, with unit given by the constant function $1$ and
$m_{\Omega} $ is the restriction of	the pointwise function multiplication to  the graph $\perp^Q\subset \Omega\times \Omega$.
	\end{prop}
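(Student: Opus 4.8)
The plan is to transpose to the symbol-valued setting, at the level of asymptotic expansions, the proof that $\left(\calm,\perp^Q\right)$ is a \loc algebra (Proposition~\ref{prop:MCperp}). Recall that an element $\sigma\in\Omega$ is a germ at $0$ in $\left(\R^\infty,Q\right)$ of a family of polyhomogeneous symbols on $\R_{\ge 0}$ whose order and asymptotic coefficients are meromorphic germs with linear poles in the parameter, so that to $\sigma$ one attaches a dependence space $\supp(\sigma)\subset(\C^\infty)^*$ as in \cite[Definitions 2.9 and 2.13]{CGPZ1}, with $\sigma_1\perp^Q\sigma_2\Longleftrightarrow\supp(\sigma_1)\perp^Q\supp(\sigma_2)$.

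First I would check that $\left(\Omega,\perp^Q\right)$ is a \loc vector space. The crucial point is the inclusion $\supp(\lambda\sigma_1+\mu\sigma_2)\subseteq\supp(\sigma_1)+\supp(\sigma_2)$ for all scalars $\lambda,\mu$, which is established exactly as for germs in $\calm$ by unravelling how the dependence space of a sum is built from those of the summands. Granting this, for any $X\subseteq\Omega$ the polar set $X^\perp$ consists of those $\sigma$ whose dependence space is $Q$-orthogonal to every $\supp(\tau)$, $\tau\in X$, equivalently $Q$-orthogonal to the subspace $W=\sum_{\tau\in X}\supp(\tau)$; since the $Q$-orthogonal complement of $W$ is a linear subspace of $(\C^\infty)^*$ and dependence spaces add under linear combinations, $X^\perp$ is a linear subspace of $\Omega$.

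The core step is to show that pointwise multiplication restricted to the graph of $\perp^Q$ takes values in $\Omega$ and is compatible with dependence spaces: if $\sigma_1\perp^Q\sigma_2$ then $\sigma_1\sigma_2\in\Omega$ and $\supp(\sigma_1\sigma_2)\subseteq\supp(\sigma_1)+\supp(\sigma_2)$. The pointwise product of two polyhomogeneous symbols on $\R_{\ge 0}$ is again polyhomogeneous, so only the meromorphic structure of its expansion needs to be controlled: the order of the product is the sum of the orders of the factors, hence again admissible, and each coefficient of the expansion of $\sigma_1\sigma_2$ is a \emph{finite} sum of products of a coefficient of $\sigma_1$ with one of $\sigma_2$. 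Since $\supp(\sigma_1)\perp^Q\supp(\sigma_2)$, each such product is a product in $\calm$ of two germs with orthogonal dependence spaces, hence lies in $\calm$ by Proposition~\ref{prop:MCperp}, with dependence space contained in $\supp(\sigma_1)+\supp(\sigma_2)$; summing the finitely many contributions to a given order, and using that $\calm$ is closed under addition, gives the claim. I expect this transversality-of-poles argument to be the main obstacle, as one must keep track of the symbol filtration and the meromorphic structure simultaneously; the bookkeeping becomes routine once one notes that orthogonal dependence spaces permit a choice of coordinates in which $\sigma_1$ and $\sigma_2$ involve disjoint groups of variables, reducing matters to the case already handled for $\calm$.

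The remaining axioms are then formal. \Loc bilinearity of $m_\Omega$ follows from ordinary bilinearity of pointwise multiplication together with the dependence-space inclusion just proved, which ensures that $m_\Omega$ maps $\left(X^\perp\times X^\perp\right)\cap\perp^Q$ into $X^\perp$ for every $X$. Associativity holds because multiplication of functions is associative whenever all intermediate products are defined: for pairwise orthogonal $\sigma_1,\sigma_2,\sigma_3$, the core step gives $\sigma_1\sigma_2\in\Omega$ with $\supp(\sigma_1\sigma_2)\subseteq\supp(\sigma_1)+\supp(\sigma_2)$, which is $\perp^Q$-orthogonal to $\supp(\sigma_3)$, so $(\sigma_1\sigma_2)\sigma_3$ and, symmetrically, $\sigma_1(\sigma_2\sigma_3)$ are both defined and equal. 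Commutativity is immediate. Finally the constant symbol $1$ lies in $\Omega$ with $\supp(1)=\{0\}$, so $\{1\}^\perp=\Omega$ and $1\cdot\sigma=\sigma$ for all $\sigma$; hence $1$ is a \loc unit and $\left(\Omega,\perp^Q,m_\Omega\right)$ is a commutative and unital \loc algebra.
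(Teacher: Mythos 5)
The survey states this proposition without proof, as a citation of \cite[Proposition 4.15]{CGPZ2}, so there is no in-paper argument to compare yours against; the assessment below is of your proof on its own terms. Your route is the expected one and is sound: reduce everything to the dependence-space inclusions $\supp(\lambda\sigma_1+\mu\sigma_2)\subseteq \supp(\sigma_1)+\supp(\sigma_2)$ and $\supp(\sigma_1\sigma_2)\subseteq\supp(\sigma_1)+\supp(\sigma_2)$, deduce that polar sets are linear subspaces and that the partial product stabilises them (which is exactly condition (\ref{eq:muUl})), and then note that associativity, commutativity and the unit are inherited from pointwise multiplication. One conceptual miscalibration is worth correcting: the ``transversality-of-poles'' difficulty you anticipate as the main obstacle is not actually present. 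Since $\calm$ is an honest algebra, the products of coefficient germs lie in $\calm$ whether or not the dependence spaces are $Q$-orthogonal, so $\sigma_1\sigma_2$ is a meromorphic germ of symbols unconditionally; the role of $\perp^Q$ is not to make the product land in $\Omega$ but to make the dependence-space bookkeeping compatible with polar sets, and, downstream, to make $\underset{+\infty}{\rm fp}$ (Proposition \ref{prop:fpOmega}) and $\pi_+^Q$ partially multiplicative. The only genuinely delicate point, which you rightly defer to the definition in \cite{CGPZ2}, is that $\Omega$ is defined so that the order and all asymptotic coefficients are meromorphic germs with linear poles with suitable uniformity, and that $\supp(\sigma)$ is declared to be spanned by the dependence spaces of the order and of the coefficients; granting that, your finite-sum computation of the coefficients of $\sigma_1\sigma_2$ closes the argument.
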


\section{\Loc morphisms}
\label{sec:locmor}
As indicated in the last section, we choose to work in the \loc setup and not the \roc-set framework to which many of the concepts below could be generalised.

\subsection {Basic notions and examples} Recall that 	
\begin{defn}
A {\bf \loc map} from a \loc set $\left(X,\mtop_X\right)$ to a \loc set $ (Y, \mtop_Y)$ is a map $\phi:X\to Y$ such that $(\phi\times \phi)(\mtop_X)\subseteq \mtop_Y$. More generally, maps $\phi,\psi:\left( X,\mtop_X\right)\to \left(Y, \mtop_Y\right)$ are called {\bf independent} and denoted $\phi\top \psi$ if
$(\phi\times \psi)(\mtop_X) \subseteq \mtop_Y$.
\mlabel{defn:localmap}
\end{defn}

So a \loc map is a map independent of itself.

\begin{defn}  \mlabel{defn:locallmap}
Let $\left(U,\mtop_U\right)$ and $ (V, \mtop_V)$ be \loc vector spaces. A linear map $\phi:\left( U,\mtop_U\right)\to \left(V, \mtop_V\right)$ is called a {\bf \loc linear map} if it is a \loc map.
\end{defn}

\begin {defn}
A \loc linear map $f:(A,\mtop_A,\cdot_A)\to (B,\mtop_B,\cdot_B)$ between two (not necessarily unital) \loc algebras is called a {\bf \loc algebra homomorphism} if
\begin{equation}
f(u\cdot_A v)=f(u)\cdot_B f(v)\ \  \tforall (u,v)\in\mtop_A.
\mlabel{eq:lmultlin}
\end{equation}
\end{defn}

By the definition, the composition of \loc morphisms is again a \loc morphism, so we have the category ${\bf LA}$ of \loc  algebras over $K$.

Here are fundamental examples of   \loc  morphisms on $\calm  $. The first one plays a central role in our multivariate minimal subtraction renormalisation scheme.

Since $\calm ^Q_{-}$ is a \loc ideal of $\calm $, we have
\begin{prop}\label{prop:locproj}  \cite[Proposition 3.19]{CGPZ1}{\bf (The $Q$-orthogonal projection onto holomorphic germs).} The   projection $\pi^Q_+:\left(\calm, \perp^Q\right) \longrightarrow \left(\calm _{+}, \perp^Q\right)$ is a \loc algebra homomorphism.
\end{prop}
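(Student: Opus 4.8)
The plan is to show that $\pi_+^Q$ is linear, preserves the locality relation $\perp^Q$, and is partially multiplicative on the graph of $\perp^Q$; the first two are immediate from the decomposition (\ref{eq:decmero}) while the third is the substantive point, and there the input is Proposition \ref{prop:m-}, i.e. that $\calm_+$ is a sub-locality algebra and $\calm_-^Q$ is a locality ideal. First I would recall that $\pi_+^Q$ is by definition the linear projection associated with the direct sum decomposition $\calm=\calm_+\oplus\calm_-^Q$, so linearity is automatic, and that it is a locality linear map: if $f_1\perp^Q f_2$ then writing $f_i=f_i^++f_i^-$ with $f_i^+\in\calm_+$, $f_i^-\in\calm_-^Q$, one checks that $\mathrm{Dep}(f_i^+)\subseteq\mathrm{Dep}(f_i)$ (the holomorphic and polar parts of a germ depend only on linear forms on which the germ itself depends, by the construction of the decomposition in \cite{GPZ2}), hence $\mathrm{Dep}(\pi_+^Q f_1)=\mathrm{Dep}(f_1^+)\perp^Q\mathrm{Dep}(f_2^+)=\mathrm{Dep}(\pi_+^Q f_2)$, so $\pi_+^Q f_1\perp^Q\pi_+^Q f_2$ and in particular $\pi_+^Q$ is independent of itself.

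The heart of the argument is the identity $\pi_+^Q(f_1\, f_2)=\pi_+^Q(f_1)\,\pi_+^Q(f_2)$ for $(f_1,f_2)\in\perp^Q$. I would expand
\[
f_1\, f_2=(f_1^++f_1^-)(f_2^++f_2^-)=f_1^+f_2^++f_1^+f_2^-+f_1^-f_2^++f_1^-f_2^-.
\]
Since all four product terms are formed from pairs of germs that are still $\perp^Q$-independent (dependence spaces only shrink when passing to holomorphic/polar parts), each product is legitimate and lies in $\calm$. By Proposition \ref{prop:m-}, $\calm_+$ being a sub-locality algebra gives $f_1^+f_2^+\in\calm_+$, and $\calm_-^Q$ being a locality ideal gives $f_1^+f_2^-$, $f_1^-f_2^+$, $f_1^-f_2^-\in\calm_-^Q$ (here one uses that each of these products involves at least one factor in $\calm_-^Q$ and the other factor is $\perp^Q$-independent of it, so lands in the ideal). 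Applying $\pi_+^Q$, which kills $\calm_-^Q$ and fixes $\calm_+$, leaves only $f_1^+f_2^+=\pi_+^Q(f_1)\,\pi_+^Q(f_2)$, which is the desired equality.

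The main obstacle — really the only place where care is needed — is the bookkeeping of dependence spaces: one must be sure that passing from $f_i$ to $f_i^\pm$ does not enlarge $\mathrm{Dep}$, so that all the product terms above remain in the graph of $\perp^Q$ and so that the ideal property of $\calm_-^Q$ (which by Definition \ref{defn:localgebra}(v) only applies to $\perp^Q$-independent pairs) is actually applicable. This is exactly the content of the $Q$-linearity of the splitting established in \cite{GPZ2}, so I would cite that and the reader can check it is harmless. Everything else is the routine expansion above, and the statement then follows by combining linearity, preservation of $\perp^Q$, and the multiplicativity identity.
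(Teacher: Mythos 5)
Your proposal is correct and follows exactly the route the paper indicates: the paper prefaces the proposition with ``Since $\calm^Q_-$ is a \loc ideal of $\calm$, we have\dots'' and defers the details to \cite{CGPZ1}, and your expansion of $(f_1^++f_1^-)(f_2^++f_2^-)$ together with the observation that $\mathrm{Dep}(f_i^\pm)\subseteq\mathrm{Dep}(f_i)$ is precisely the bookkeeping that makes that one-line justification rigorous. No gaps; the point you flag as the only delicate one (that the splitting does not enlarge dependence spaces, so the ideal property of Proposition \ref{prop:m-} is applicable to all cross terms) is indeed the crux and is correctly attributed to the construction in \cite{GPZ2}.
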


Since $\top^Q _D\subset \perp ^Q$ and ${\rm Supp}(\pi ^Q_+f)\subset {\rm Supp}(f)$, the  projection $\pi^{Q}_+$ is also a \loc algebra homomorphism on   $\left(\calm , \top^Q _D\right) $.

\begin{rk} We view the fact of going from the locality relation $\perp^Q$ to the \loc relation  $\top_D^Q$  with a smaller graph $\top_D^Q\subset \perp^Q$,   as a reduction of the \loc relation, which rigidifies the setup in a manner similar to that fact that the structure group of a principal bundle to a subgroup rigidifies the underlying geometric setup.\end{rk}

\subsection{Locality morphisms on the algebra of  meromorphic  germs of symbols}
\label {subs:LocSym}
On the locality algebra $(\Omega, \perp^Q) $ of meromorphic germs of symbols, we can define several important locality maps:
\begin {itemize}
\item the Hadamard finite part at infinity map $\underset{+\infty}{\rm fp}: \Omega \to \calm$ ;
\item locality maps: ${\mathfrak S}_\lambda : \Omega \to \Omega$ with  $\lambda =0, \pm 1$.
\end{itemize}

These maps are constructed as follows. Though the Hadamard finite part at infinity map is not an algebra homomorphism on $\cals (\R _{\ge 0})$; yet   its extension to $\Omega$ enjoys the following property. 

\begin {prop} \label{prop:fpOmega}\cite[Proposition 4.17]{CGPZ2} The Hadamard finite part at infinity map $\underset{+\infty}{\rm fp}$ extends to a locality algebra homomorphism
 $$\underset{+\infty}{\rm fp}: (\Omega \perp ^Q)\to (\calm, \perp^Q). $$
\end{prop}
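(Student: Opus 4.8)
The plan is to show that the Hadamard finite part at infinity map $\underset{+\infty}{\rm fp}$, a priori only defined on $\cals(\R_{\ge 0})$, admits a well-defined extension to the larger locality algebra $\Omega$ of meromorphic germs of symbols, and that this extension is compatible with both the locality relations $\perp^Q$ and the pointwise products. First I would recall the structure of an element $\sigma \in \Omega$: it is a meromorphic germ at $0$ in $\left(\R^\infty, Q\right)$ with values in $\cals(\R_{\ge 0})$, so in a neighbourhood of $0$ its asymptotic expansion coefficients $a_j(\ell)$ and exponents $\alpha(\ell) - j$ depend meromorphically on the variable $\ell$. The Hadamard finite part picks out, for each fixed $\ell$ off the polar locus, the sum of those coefficients $a_j(\ell)$ for which $\alpha(\ell) - j = 0$; the first step is to check that $\ell \mapsto \underset{+\infty}{\rm fp}\,\sigma(\ell)$ is again a meromorphic germ with linear poles, i.e.\ lands in $\calm$. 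This is where one uses that the exponents are affine in $\ell$, so that the condition $\alpha(\ell) - j = 0$ cuts out an affine hyperplane, and the coefficient $a_j(\ell)$ restricted there is meromorphic with linear poles; summing the finitely many relevant contributions keeps us in $\calm$.

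Next I would verify that the extension is a \loc linear map from $(\Omega, \perp^Q)$ to $(\calm, \perp^Q)$. Linearity is immediate from linearity of the finite part on symbols and of the restriction-to-hyperplane operation. For the locality compatibility, the key observation is that the dependence space of $\underset{+\infty}{\rm fp}\,\sigma$ is contained in the dependence space of $\sigma$ (in the variable $\ell$): taking the finite part at infinity does not introduce new linear forms in $\ell$, it only selects coefficients and restricts to affine subspaces built from the original data. Hence if $\sigma_1 \perp^Q \sigma_2$, meaning ${\rm Dep}(\sigma_1) \perp^Q {\rm Dep}(\sigma_2)$, then the same orthogonality holds for the dependence spaces of their images, so $\underset{+\infty}{\rm fp}\,\sigma_1 \perp^Q \underset{+\infty}{\rm fp}\,\sigma_2$.

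Finally, and this is the crux, I would establish the partial multiplicativity $\underset{+\infty}{\rm fp}(\sigma_1 \cdot \sigma_2) = \underset{+\infty}{\rm fp}\,\sigma_1 \cdot \underset{+\infty}{\rm fp}\,\sigma_2$ for $\sigma_1 \perp^Q \sigma_2$. The reason the finite part fails to be multiplicative on all of $\cals(\R_{\ge 0})$ is the classical one: when multiplying $\sum a_j x^{\alpha-j}$ by $\sum b_k x^{\beta-k}$, the coefficient of $x^0$ in the product is $\sum_{\alpha-j+\beta-k=0} a_j b_k$, and unless $\alpha = \beta = 0$ this mixes terms whose individual exponents are nonzero, so it is not the product $(\sum_{\alpha-j=0}a_j)(\sum_{\beta-k=0}b_k)$ of the two finite parts. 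The point of passing to $\Omega$ and restricting to the locality graph is precisely that, for $\sigma_1 \perp^Q \sigma_2$, the orders $\alpha(\ell)$ and $\beta(\ell)$ of $\sigma_1$ and $\sigma_2$ depend on orthogonal (hence, generically, disjoint) sets of variables, so the vanishing locus $\{\alpha(\ell) + \beta(\ell') - j - k = 0\}$ intersected with the relevant region forces $\alpha(\ell) - j = 0$ and $\beta(\ell') - k = 0$ separately — the "cross terms" are killed. I expect the main obstacle to be making this last argument fully rigorous: one must carefully track how the asymptotic expansions multiply at the level of meromorphic germs (the product of two polyhomogeneous symbols is again polyhomogeneous, with expansion obtained by the Cauchy-product-type formula), and then argue that independence in the $\ell$-variables genuinely decouples the constraint on exponents. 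Granting the structural facts about $\Omega$ from Proposition~\ref{prop:orderprod} and the dependence-space formalism of \cite{CGPZ1}, the remaining computation is a bookkeeping of asymptotic expansions, which I would organise by first treating a single homogeneous term of each factor and then summing.
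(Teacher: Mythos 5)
The survey gives no proof of this statement: it is quoted verbatim from \cite[Proposition 4.17]{CGPZ2}, so your attempt can only be checked against the definitions the paper does record and against internal consistency. Your architecture (show $\underset{+\infty}{\rm fp}$ lands in $\calm$, show locality via ${\rm Dep}(\underset{+\infty}{\rm fp}\,\sigma)\subseteq{\rm Dep}(\sigma)$, then prove partial multiplicativity) is the right one, and you correctly identify the crux as a decoupling of exponents. But both analytic steps are argued incorrectly. For well-definedness, the finite part of a meromorphic germ of symbols cannot be obtained by ``restricting $a_j(\ell)$ to the affine hyperplane $\{\alpha(\ell)-j=0\}$'': a function supported on a union of hyperplanes is not a meromorphic germ. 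One must instead define $\underset{+\infty}{\rm fp}\,\sigma$ by meromorphic continuation from generic $\ell$, i.e.\ retain only those homogeneous components whose exponent vanishes \emph{identically} in $\ell$, together with the limit at infinity of the smoothing part; that is what produces an element of $\calm$.

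The multiplicativity step is the more serious gap, because your argument proves too much. Ordinary symbols in $\cals (\R _{\ge 0})$, viewed as constant germs, have trivial dependence spaces and hence satisfy any orthogonality-of-dependence-spaces condition; your reasoning would therefore make $\underset{+\infty}{\rm fp}$ multiplicative on all of $\cals (\R _{\ge 0})$, which the paper explicitly denies (take $\sigma_1=\chi(x)\,x$ and $\sigma_2=\chi(x)\,x^{-1}$ with $\chi$ the excision function: both finite parts vanish while the product has finite part $1$). The specific error is the claim that the locus $\{\alpha(\ell)+\beta(\ell')-j-k=0\}$ ``forces $\alpha(\ell)-j=0$ and $\beta(\ell')-k=0$ separately''; for affine functions this is false, and the correct statement concerns \emph{identical} vanishing: if $\alpha,\beta$ are affine with $Q$-orthogonal linear parts, then $\alpha+\beta-j-k\equiv 0$ forces both linear parts to vanish, after which the constant parts can still cancel against each other — exactly the classical failure. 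Closing the gap requires the actual definitions from \cite{CGPZ2}: the dependence space of a germ of symbols must incorporate the linear parts of its orders, and $\Omega$ must be organised so that the only identically-zero exponents come from the genuinely constant-in-$x$ part, with $\underset{+\infty}{\rm fp}$ realised as the projection onto that part along a \loc ideal, in parallel with Proposition \ref{prop:m-} for $\pi_+^Q$. Without that structural input the key step does not go through.
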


For $\lambda=1$ (resp. $\lambda=-1$) we define
$${\mathfrak S}_1(\sigma) (n):=\sum_{k=1}^n \sigma(k) \quad
 \text{\big(resp.} \ {\mathfrak S}_{-1}(\sigma) (n):=\sum_{k=1}^{n-1} \sigma(k)\big),$$
both maps can be interpolated by means of the Euler-MacLaurin formula \cite[Eqn. (13.1.1)]{H} to take values in $\Omega$; and  for $\lambda=0$ we define
 $${\mathfrak S}_{0}(\sigma) (x):={\mathfrak I}(\sigma) (x):=\int_{1}^x \sigma(y)\, dy.$$

\subsection {Locality morphisms on lattice cones}
\label {subs:LocCones}
On a strongly convex lattice cone $(C, \Lambda _C)$ with interior $C^o$,   discrete (resp. continuous) Laplace transforms of characteristic functions lead to exponential sums  (resp. integrals) and give rise to meromorphic functions
\[\sum_{\vec n\in C^o\cap \Lambda _C} e^{\langle \vec \e, \vec n\rangle} \quad  \quad \left({\rm resp.}\quad \int_{  C } e^{\langle \vec \e, \vec x\rangle}\, d\vec x _{\Lambda _C}\right).\]
These can be extended by linearity and subdivisions to any  convex lattice cone, to build
maps $S^o$ and $I$ from $\Q {\mathcal C}$ to $\calm$.

The idempotency $(C, \Lambda _C)\cdot (C, \Lambda _C)=(C, \Lambda_C)$ for any lattice cone $(C,\Lambda_C)$ implies that $S^o$ and $I$ are not algebra homomorphisms for the Minkowski sum $\cdot$, since otherwise they can only assume values $t$ with $t^2=t$, meaning $t=0 \ {\rm or }\ 1$. But in the locality setting, we have

\begin {prop} $($\cite[Proposition 3.7]{GPZ1}$)$ $S^o$ and $I$ are
 \loc algebra homomorphisms from $\left(\Q {\mathcal C}, \perp^Q\right)$ to $\left({\mathcal M}, \perp^Q\right)$.
\label{prop:SI}
\end{prop}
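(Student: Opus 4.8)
The plan is to prove that $S^o$ and $I$ are \loc algebra homomorphisms from $\left(\Q\bfc,\perp^Q\right)$ to $\left(\calm,\perp^Q\right)$ in three stages: first establish that both maps are well-defined with values in $\calm$; second check they are \loc linear maps; third verify the partial multiplicativity condition (\ref{eq:lmultlin}) on $\perp^Q$-independent pairs. For the well-definedness, one begins with a strongly convex \emph{smooth} (simplicial unimodular) lattice cone, for which the exponential integral and the interior exponential sum are classically computed to be rational functions of the $e^{\vep_i}$ with poles along the hyperplanes $\langle\vec\vep,u_i\rangle=0$, hence meromorphic germs with linear poles, so they lie in $\calm$; then one invokes additivity on disjoint unions together with the fact that every convex lattice cone admits a subdivision into smooth lattice cones (and that the values on smooth cones patch together consistently, independently of the chosen subdivision) to extend $S^o$ and $I$ by linearity to all of $\Q\bfc$ landing in $\calm$.

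For \loc linearity, I would note that $\Q\bfc$ is graded by dimension and that $\perp^Q$ on cones is defined via $\operatorname{span}$; since $S^o$ and $I$ are already linear by construction, the content is that $(C_1,\Lambda_1)\perp^Q(C_2,\Lambda_2)$ forces $\operatorname{Dep}(S^o(C_1,\Lambda_1))\perp^Q\operatorname{Dep}(S^o(C_2,\Lambda_2))$ and likewise for $I$. This follows because the dependence space of the meromorphic germ attached to a cone $(C,\Lambda_C)$ is contained in (the image under $Q$ of) $\operatorname{span}(C)$: for a smooth cone the germ depends only on the linear forms $\langle\vec\vep,u_i\rangle$ dual to the generators, so $\operatorname{Dep}\subseteq Q(\operatorname{span}(C))$, and this containment is preserved under taking subdivisions and $\Q$-linear combinations. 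Hence $\operatorname{span}(C_1)\perp^Q\operatorname{span}(C_2)$ implies the required orthogonality of dependence spaces, using the definition of $\perp^Q$ on $\calm$ from Proposition~\ref{prop:MCperp}.

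For the partial multiplicativity, the key observation is that when $(C_1,\Lambda_1)\perp^Q(C_2,\Lambda_2)$, the spans are $Q$-orthogonal, so $Q$ restricts to a direct sum decomposition of $\operatorname{span}(C_1\cdot C_2)=\operatorname{span}(C_1)\oplus\operatorname{span}(C_2)$ and the lattice $\Lambda_1+\Lambda_2$ is the direct sum $\Lambda_1\oplus\Lambda_2$; consequently the integral (resp. interior sum) over $C_1\cdot C_2$ factors as a product of the integral (resp. sum) over $C_1$ and over $C_2$, by Fubini for the integral and by the multiplicativity of the exponential $e^{\langle\vec\vep,\vec n_1+\vec n_2\rangle}=e^{\langle\vec\vep,\vec n_1\rangle}e^{\langle\vec\vep,\vec n_2\rangle}$ together with the product decomposition $(C_1\cdot C_2)^o\cap(\Lambda_1+\Lambda_2)=(C_1^o\cap\Lambda_1)\times(C_2^o\cap\Lambda_2)$ which holds precisely because of transversality. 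One then reduces the general case to the smooth case by choosing subdivisions of $C_1$ and $C_2$ whose products subdivide $C_1\cdot C_2$ — possible because orthogonality of spans guarantees that products of cells from the two subdivisions are again lattice cones meeting only in lower-dimensional faces — and using bilinearity of both sides in the two arguments.

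I expect the main obstacle to be the reduction-to-smooth-cones step in the proof of multiplicativity: one must verify that a subdivision of $C_1$ into smooth cones $\{\sigma_a\}$ and of $C_2$ into smooth cones $\{\tau_b\}$ yields, via the transversality afforded by $\perp^Q$, a bona fide subdivision $\{\sigma_a\cdot\tau_b\}$ of $C_1\cdot C_2$ into smooth lattice cones with the compatible lattice $\Lambda_{\sigma_a}\oplus\Lambda_{\tau_b}$, and that the interior-point and lattice-point decompositions respect this product structure on the nose (including careful treatment of boundary faces, where $S^o$ uses interior points so that the combinatorics of inclusion–exclusion over faces must be checked to be multiplicative). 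Once this geometric bookkeeping is in place, the analytic identities are the elementary Fubini/exponential factorisations already indicated, and \loc linearity plus the values-in-$\calm$ claim follow from the smooth-cone computation combined with additivity, so the remaining work is routine.
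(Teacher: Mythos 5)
Your proposal is correct and follows essentially the same route as the paper (and the cited reference [GPZ1]): values in $\calm$ are obtained from the explicit computation on smooth lattice cones combined with additivity under subdivision, locality of the maps comes from the containment of the dependence space in the ($Q$-image of the) span of the cone, and partial multiplicativity on $\perp^Q$-independent pairs reduces via bilinearity and compatible subdivisions to the exponential/Fubini factorisation on orthogonal smooth cones. The bookkeeping you flag around open cones and subdivisions is indeed where the technical work lies, and it is handled exactly along the lines you indicate.
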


Similarly,  $S^o$ and $I$ are
 \loc morphisms from $\big(\Q {\mathcal C}, \top_D^Q \big)$ to $\left({\mathcal M}(\C^\infty), \top _D\right)$, a useful property  of these maps which shows the importance of locality algebra.

\subsection{Linear operators lifted to the algebra of rooted forests.}
\label {subs:LocLin}
Let us briefly recall some definitions and results borrowed from \cite{CGPZ3}.

\begin{defn}
Let $(\Omega, \top)$ be a \loc set. A {\bf \loc $(\Omega,\top)$-operated set} or simply a {\bf \loc operated set} is a \loc set $(X,\top_X)$ together with a {\bf partial action} $\beta$ of $\Omega$ on $X$: there is a subset
$\top_{\Omega,X}:=\Omega\times_\top X\subseteq \Omega \times X$
and a map

$$ \beta:  \Omega\times_\top X \longrightarrow X, \ (\omega, x)\mapsto \beta ^\omega (x) $$
satisfying the following compatibility conditions
\begin{enumerate}
\item
For
$$\Omega \times_\top X  \times_\top X: = \{(\omega,u,u')\in \Omega\times X\times X\,|\,
 (u,u')\in \top_X, (\omega, u), (\omega , u')\in \Omega \times_\top X\},$$
we have
$$\beta \times \Id_X: \Omega \times_\top X  \times_\top X
\longrightarrow X\times_\top X.$$
In other words,
If $(\omega,u,,u')$ is in $\Omega \times_\top X  \times_\top X$, then  $(\beta^\omega(u),u')$ is in $\top_X$.
  \item
For
$$\Omega\times_\top \Omega \times_\top X:=\{(\omega,\omega',u)\in \Omega \times \Omega \times X\,|\, (\omega,\omega')\in T_\Omega, (\omega,u), (\omega',u)\in \Omega\times_\top X\},$$
we have
$$ \Id_\Omega \times \beta: \Omega\times_\top \Omega \times_\top X \longrightarrow \Omega \times_\top X,$$
that is, if $(\omega, \omega')\in \top_\Omega, (\omega,u), (\omega',u)\in \Omega\times_\top X$, then
$(\omega',\beta^\omega(u))\in \Omega\times_\top X$.
	\end{enumerate}
 \mlabel{defn:locopset}
 \end{defn}

\begin{defn}
Let $(\Omega , \top )$ be a \loc set.
 \begin{enumerate}
  \item A {\bf \loc $(\Omega,\top)$-operated semigroup} is a  quadruple  $\left(U,\top _U, \beta ,m_U\right)$, where $(U,\top _U,m_U)$ is a \loc semigroup and
  $\left(U,\top _U, \beta \right)$ is a $(\Omega,\top)$-operated \loc set such that if
 $ (\omega, u, u')$ is in $\Omega\times_\top U\times_\top U$, then $(\omega, uu')$ is in $\Omega\times_\top U$;
 \item A {\bf \loc $(\Omega,\top)$-operated monoid} is a  quintuple  $\left(U,\top _U, \beta ,m_U,1_U\right)$, where $(U,\top _U,m_U,1_U)$ is an \loc monoid
  and $\left(U,\top _U, \beta, m_U \right)$ is a $(\Omega,\top)$-operated \loc semigroup such that $\Omega \times 1_U\subset \Omega \times _\top U$.
    \item A {\bf $(\Omega,\top)$-operated \loc nonunitary algebra } (resp. {\bf $(\Omega,\top)$-operated \loc unitary algebra}) is a  quadruple  $\left(U,\top _U, \beta ,m_U\right)$ (resp. quintuple
    $(U,\top _U, \beta,$ $m_U, 1_U)$) which is at the same time a \loc algebra (resp. unitary algebra) and a \loc $(\Omega,\top)$-operated semigroup (resp. monoid), satisfying the additional condition that for any $\omega\in \Omega$, the set $\{\omega\}^{\top_{\Omega,U}}:=\{ u\in U\,|\, \omega\top_{\Omega,U} u \}$ is a subspace of $U$ on which the action of $\omega$ is linear. More precisely, the last condition means
\begin{quote}
let $u_1, u_2\in U$. If $u_1, u_2\in \{\omega\}^{\top_{\Omega,U}}$ then for all $k_1, k_2\in K$, we have $k_1u_1+k_2u_2\in \{\omega\}^{\top_{\Omega,U}}$ and $\beta^\omega(k_1u_1+k_2u_2)=k_1\beta^\omega(u_1)+k_2\beta^\omega(u_2)$.
\end{quote}
(resp. this condition and $\Omega \times 1_U\subset \Omega \times _\top U$)
\end{enumerate}
\mlabel{defn:basedlocsg}
\end{defn}

\begin{defn}
Given $(\Omega,\top _\Omega)$-operated \loc structures  (sets, semigroups, monoids, nonunitary algebras, algebras) $(U_i, \top _{U_i}, \beta_i)$, $i=1,2,$ a {\bf morphism of \loc operated  \loc structures}  is a locality map $f: U_1\to U_2$ such that $f\circ \beta_1^\omega = \beta_2^{\omega} \circ f$ for all $\omega\in \Omega$.
\mlabel{defn:morphismoperatedloc}
\end{defn}

The key property of $K\, {\mathcal F}_{\Omega, \top_\Omega}$ is the following universal property.
\begin {prop} $K\, {\mathcal F}_{\Omega, \top_\Omega}$ is an $(\Omega, \top _\Omega)$-operated algebra, and it is the initial object in the category of commutative $(\Omega, \top _\Omega)$-operated algebra.
\label{pp:universal}
\end{prop}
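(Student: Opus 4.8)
The plan is to establish the two assertions in turn. For the first, recall that the relevant partial action of $(\Omega,\top_\Omega)$ on $K\calf_{\Omega,\top_\Omega}$ is given by the \textbf{grafting operators}: for $\omega\in\Omega$, $B_+^\omega$ sends a properly $(\Omega,\top_\Omega)$-decorated forest $F$ all of whose vertex decorations lie in $\{\omega\}^{\top_\Omega}$ to the tree obtained by adjoining a new root decorated by $\omega$ and attaching every tree of $F$ to it; one extends $B_+^\omega$ linearly. Since $F$ is properly decorated and $\omega\top_\Omega d(v)$ for all $v\in V(F)$, the tree $B_+^\omega(F)$ is again properly decorated, so $B_+^\omega$ does land in $K\calf_{\Omega,\top_\Omega}$. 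Declaring $\Omega\times_\top K\calf_{\Omega,\top_\Omega}$ to be (the linear span of) the pairs $(\omega,F)$ with all decorations of $F$ in $\{\omega\}^{\top_\Omega}$, the compatibility conditions of Definitions~\ref{defn:locopset} and \ref{defn:basedlocsg}(3) are immediate bookkeeping: the decoration set of a concatenation $F_1\cdot F_2$ is the union of those of $F_1,F_2$, that of $B_+^\omega(F)$ is $\{\omega\}$ together with that of $F$, and $B_+^\omega$ is by construction linear on $\{\omega\}^{\top_{\Omega,K\calf}}$. Together with the \loc algebra structure for concatenation (see \cite[Proposition 1.22]{CGPZ2}) this gives the first assertion.

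For initiality, let $(U,\top_U,\beta,m_U,1_U)$ be an arbitrary commutative $(\Omega,\top_\Omega)$-operated \loc algebra. Every properly decorated forest is exactly one of: the empty forest; a single tree $T=B_+^\omega(\bar F)$, with $\omega$ the decoration of its (unique) root and $\bar F$ the forest obtained by deleting that root; or a concatenation $T_1\cdots T_k$ of $k\ge 2$ nonempty trees, the $T_i$ being uniquely determined since the forests are non-planar. One therefore defines $f:K\calf_{\Omega,\top_\Omega}\to U$ on forests by induction on the number of vertices: $f(\text{empty forest})=1_U$; $f(B_+^\omega(\bar F))=\beta^\omega(f(\bar F))$; $f(T_1\cdots T_k)=m_U(f(T_1),\dots,f(T_k))$ (well posed, the product being order independent because $m_U$ is commutative and associative on mutually independent tuples); then extend $f$ linearly.

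The heart of the argument is the following claim, proved by simultaneous induction on the number of vertices: writing $D(F)$ for the multiset of decorations of a properly decorated forest $F$, (a) for every $\eta\in\Omega$ with $\eta\top_\Omega\delta$ for all $\delta\in D(F)$ one has $(\eta,f(F))\in\Omega\times_\top U$; and (b) if $F\top_{\calf_{\Omega,\top_\Omega}}F'$ (i.e. $\delta\top_\Omega\delta'$ for all $\delta\in D(F),\delta'\in D(F')$) then $f(F)\top_U f(F')$; in particular all the products and operator values occurring in the definition of $f$ are indeed defined. The inductive step uses precisely the operated-\loc-algebra axioms of $U$: the base case is the unitality conditions $\Omega\times 1_U\subseteq\Omega\times_\top U$ and $\{1_U\}^\top=U$; for a tree, condition (2) of Definition~\ref{defn:locopset} propagates (a) through $\beta^\omega$ and condition (1) propagates (b); for a concatenation, the operated-semigroup closure of Definition~\ref{defn:basedlocsg}(1) propagates (a) through $m_U$ while property (\ref{eq:muxl}) of a \loc semigroup propagates (b). With the claim in hand, $f$ is a \loc linear map (extend independence from the generating forests using that polar sets in $U$ are subspaces), it is partially multiplicative and intertwines $B_+^\omega$ with $\beta^\omega$ straight from the defining clauses, and $f(\text{empty forest})=1_U$; hence $f$ is a morphism of commutative $(\Omega,\top_\Omega)$-operated \loc algebras. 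Uniqueness is then forced: any such morphism $g$ satisfies $g(\text{empty forest})=1_U$, $g\circ B_+^\omega=\beta^\omega\circ g$ and $g(F_1\cdot F_2)=m_U(g(F_1),g(F_2))$ for $F_1\top F_2$, and since every forest is built from the empty forest by iterated grafting and concatenation of $\top$-independent forests, induction on the number of vertices gives $g=f$ on $\calf_{\Omega,\top_\Omega}$, hence on $K\calf_{\Omega,\top_\Omega}$.

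The step I expect to be the real obstacle is claim (a)--(b): identifying precisely the ``decoration-governed'' independence data carried by $f(F)$ and checking it is stable under the partial product and the operators $\beta^\omega$ by invoking exactly the compatibility axioms of a \loc operated algebra; once this simultaneous induction is set up correctly, the morphism property and uniqueness are routine, and the first assertion is pure combinatorial bookkeeping on rooted forests.
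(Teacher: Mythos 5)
The paper itself states Proposition~\ref{pp:universal} without proof (it is quoted from \cite{CGPZ3}), so there is no in-text argument to compare against; your proof is correct and follows exactly the approach one expects the reference to take: the grafting operators $B_+^{\omega}$, defined on forests all of whose decorations lie in $\{\omega\}^{\top_\Omega}$, furnish the operated structure, and the unique decomposition of a properly decorated forest as the empty forest, a grafted tree $B_+^{\omega}(\bar F)$, or a concatenation of independent trees forces the recursive definition of the morphism and its uniqueness. Your simultaneous induction (a)--(b) is indeed the crux, and you have matched each inductive step to the correct axiom --- conditions (1) and (2) of Definition~\ref{defn:locopset}, the closure condition of Definition~\ref{defn:basedlocsg}, and relation (\ref{eq:muxl}) --- so the argument is complete.
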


Let $(\Omega, \top _\Omega)$ be a locality algebra. By the universal property of the initial object,
a linear map $\phi:\Omega \longrightarrow \Omega$ such that $\phi\top Id_\Omega$ induces a $(\Omega,\top_\Omega)$ \loc operation on itself, and $\phi$ lifts uniquely to a \loc morphism of $(\Omega,\top_\Omega)$-operated \loc algebra for this action \cite[{Corollary 1.24}]{CGPZ2}
\begin{equation}
{\widehat \phi}:  K {\mathcal F}_{\Omega, \top_\Omega} \longrightarrow \Omega.
\label{eq:hatphi}
\end{equation}

This can be applied  to   the space $(\Omega, \perp ^Q)$  of  multivariate meromorphic germs of symbols on $\R_{\geq 0}$. The  interpolated  summation map ${\mathfrak S}_\lambda$ on $\Omega$   (with  $\lambda=\pm 1$) give rise to branched maps
\begin{equation}\label{eq:locmortrees}\widehat{\frakS _\lambda} : (\R \calf _{\Omega, \perp ^Q}, \top _{\calf _{\Omega, \perp ^Q}})\to (\Omega , \perp ^Q)
\end{equation}
 \begin{prop} The  branched map $\widehat{\frakS _\lambda}$ is a locality  algebra homomorphism.
\end{prop}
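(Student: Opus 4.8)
The plan is to deduce the statement from the universal property of properly decorated rooted forests (Proposition~\ref{pp:universal}) rather than by a direct computation on forests. Recall from \eqref{eq:hatphi} that $\widehat{\frakS_\lambda}$ is, \emph{by construction}, the unique lift of $\frakS_\lambda$ provided by that universal property: any linear endomorphism $\phi$ of $\Omega$ with $\phi\top\Id_\Omega$ (Definition~\ref{defn:localmap}) makes $\Omega$ an $(\Omega,\perp^Q)$-operated locality algebra over itself, and since $K\calf_{\Omega,\perp^Q}$ is the initial object in the category of commutative $(\Omega,\perp^Q)$-operated locality algebras, $\phi$ extends uniquely to a morphism $\widehat\phi\colon K\calf_{\Omega,\perp^Q}\to\Omega$ of $(\Omega,\perp^Q)$-operated locality algebras (\cite[Corollary 1.24]{CGPZ2}). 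By Definition~\ref{defn:morphismoperatedloc}, such a morphism is in particular a locality algebra homomorphism. So the only thing that genuinely has to be checked is that $\frakS_\lambda$, for $\lambda=\pm1$, is a well-defined \emph{linear} endomorphism of $\Omega$ satisfying $\frakS_\lambda\top\Id_\Omega$; everything else is then automatic.

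First I would check that $\frakS_\lambda$ maps $\Omega$ into $\Omega$ and is linear. For $\sigma\in\Omega$ with asymptotic expansion $\sigma\sim\sum_{j\ge 0}a_j\,x^{\alpha-j}$ in the symbol variable $x\in\R_{\ge 0}$ (the coefficients $a_j$ and order $\alpha$ being meromorphic germs with linear poles in the regularisation variables of $\left(\R^\infty,Q\right)$), the partial sums $\frakS_1(\sigma)(n)=\sum_{k=1}^n\sigma(k)$ and $\frakS_{-1}(\sigma)(n)=\sum_{k=1}^{n-1}\sigma(k)$ interpolate in $n$, by the Euler--MacLaurin formula \cite[Eqn. (13.1.1)]{H}, to a polyhomogeneous symbol in $n$ whose coefficients and order still depend meromorphically with linear poles on the regularisation variables; hence the interpolant lies in $\Omega$. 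Linearity in $\sigma$ is immediate from the construction. This is exactly the point at which the space $\Omega$ was introduced in \cite{CGPZ2} (cf. the discussion around Proposition~\ref{prop:orderprod}), so here I would simply invoke that reference.

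Next, and this is the heart of the matter, I would verify $\frakS_\lambda\top\Id_\Omega$, i.e.\ that $\sigma_1\perp^Q\sigma_2$ implies $\frakS_\lambda(\sigma_1)\perp^Q\sigma_2$. Since $\perp^Q$ on $\Omega$ is read off the dependence spaces of the germs, just as for $\calm$, it is enough to show ${\rm Dep}\bigl(\frakS_\lambda(\sigma)\bigr)\subseteq{\rm Dep}(\sigma)$. The operator $\frakS_\lambda$ acts only on the symbol variable and leaves the regularisation variables untouched; the new poles created by the interpolation arise from the integral term $\int_1^n a_j\,t^{\alpha-j}\,dt$ in the Euler--MacLaurin formula and sit on hyperplanes of the form $\{\alpha-j+1=0\}$, where $\alpha$ already belongs to ${\rm Dep}(\sigma)$. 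Hence no new linear form enters the dependence space, which gives both that $\frakS_\lambda$ is a locality endomorphism of $\left(\Omega,\perp^Q\right)$ and that $\frakS_\lambda\top\Id_\Omega$, completing the verification and hence, by the first paragraph, the proof.

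The hard part will be precisely this last dependence-space estimate: one has to be sure that the Euler--MacLaurin interpolation — in particular its remainder terms — does not produce poles along directions outside ${\rm Dep}(\sigma)$, since this single fact is what simultaneously makes $\widehat{\frakS_\lambda}$ well defined via the universal property and makes it preserve the locality relation $\top_{\calf_{\Omega,\perp^Q}}$; once it is in place, multiplicativity on independent forests is free. As an alternative I could bypass the universal property and argue directly by induction on the total number of vertices of a pair $F_1\top F_2$ of independent forests, unravelling the recursive definition of $\widehat{\frakS_\lambda}$ and repeatedly using that $\frakS_\lambda$ is a locality map, but the categorical route above is considerably shorter and cleaner.
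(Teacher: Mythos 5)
Your proposal is correct and follows the same route the paper intends: the proposition is an immediate consequence of the universal property of $K\,\calf_{\Omega,\perp^Q}$ (Proposition~\ref{pp:universal} and \cite[Corollary 1.24]{CGPZ2}), once one knows that $\frakS_\lambda$ is a linear locality endomorphism of $\left(\Omega,\perp^Q\right)$ with $\frakS_\lambda\top\Id_\Omega$. Your additional verification that the Euler--MacLaurin interpolation does not enlarge the dependence space is exactly the point the paper delegates to \cite{CGPZ2}, so there is no divergence of method.
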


\subsection {Operations lifted to the algebra of rooted forests}
\label {subs:operation}
An operation $\beta: \Omega \times U \longrightarrow U$ of a \loc set $(\Omega, \top_\Omega)$ on a \loc monoid $(U, \top_U)$ induces a \loc algebra morphism \cite[Proposition 2.6]{CGPZ3} \[\Phi_\beta: (\R {\mathcal F}_{\Omega, \top_\Omega}, \top_{{\mathcal F}_{\Omega, \top_\Omega}})  \longrightarrow  (U, \top_U).\]

On the filtered Euclidean space $\left(\R ^\infty, Q\right)$
we have a direct system
$$j_{n+1}^*: (\R ^n )^*\to (\R ^{n+1})^*.
$$
Let
$$\call =\varinjlim _n (\bbR ^n)^*$$
be the direct limit of spaces of linear forms.
To an  element $L\in \call$, regarded as a linear function on $\bbR ^\infty \otimes \bbC$, one   assigns a homogeneous pseudodifferential symbol   $x\longmapsto x^L$ on $(0,+\infty)$ of order $L$ i.e. for any $z\in \bbR ^\infty \otimes \bbC$, it defines a smooth function on $(0,+\infty)$ which is homogeneous in $x$ of degree $L(z)$. Let  \[\Omega:=\call; \quad U:=\calm [\call]\] where  $\calm [\call]$ is the group ring over $\calm$ generated by the  {additive} monoid $\call$, equipped with the  \loc relation $\perp^Q$ induced by that on $\calm$ and on ${\mathcal L}$:
$$\left(\sum_i f_ix^{L_i}\perp^Q \sum_j g_jx^{\ell _j}\right) \ \Longleftrightarrow \ \left( \{f_i, L_i\}_i\perp^Q \{g_j, \ell _j\}_j\right).
$$The map  $$ {\mathcal I}: (L,  f) \longmapsto \left(x\longmapsto \int_0^\infty  \frac {f(y)\,y^{-L}}{y+x} \,dy\right),$$ defines an operation  \begin{eqnarray}\label{eq:cali} {\mathcal I}:\call\times \calm [\call]&\longrightarrow& \calm [\call]\nonumber\\
(L, f)&\longmapsto&  {\mathcal I}(L, f)
\end{eqnarray}
  and can therefore be lifted to a   map \cite[Eqs.~(33)-(35)]{CGPZ3}\[\calr : (\R {\mathcal F}_{\call , \perp ^Q}, \top _{{\mathcal F}_{\call , \perp ^Q}})  \longrightarrow  (\calm [\call], \perp ^Q).\]
Composing the resulting map $\calr $   with the evaluation of the maps  at $x=1$, gives rise to a $ \calm$-valued \loc morphism
$$\calr _1=ev_{x=1}\circ \calr: (\R {\mathcal F}_{\call , \perp ^Q}, \top _{{\mathcal F}_{\call , \perp ^Q}})  \to (\calm, \perp ^Q).$$

\section{Renormalisation by locality morphisms}
\label{sec:renorm}
In this section we describe a general renormalisation scheme  via multivariate regularisations, and implement it to renormalise formal sums on lattice cones, branched formal sums and branched formal integrals.

\subsection {A renormalisation scheme}\label{subsec:renscheme}
If a regularised theory is realised by a \loc algebra morphism
\begin{equation}\label{eq:Phi} \phi^{\rm reg}: \left({\mathcal A}, \top\right)\longrightarrow \left(\calm, \perp^Q\right),\end{equation}
then by Proposition \mref {prop:m-}, the projection
$$\pi _+^Q: \calm \to \calm _+
$$
is a locality homomorphism. Therefore we have a \loc algebra homomorphism
\begin{eqnarray*}{\rm ev}_0 \circ \pi_+^Q \circ \phi^{\rm reg}:
 \left({\mathcal A}, \top\right)&\longrightarrow & \C,
 \end{eqnarray*}
where ${\rm ev}_0$ is the evaluation at $0$ and the locality relation on $\C $ is $\C \times \C$. This locality algebra homomorphism ${\rm ev}_0 \circ \pi_+^Q \circ \phi^{\rm reg}$ is taken as a renormalisation of $\phi^{\rm reg}$. This gives a
renormalisation scheme in this setting. When $(\cala,\top)$ is equipped with a suitable \loc Hopf algebra structure. This renormalisation agrees with the one from the algebraic Birkhoff factorisation~\cite[{Theorem 5.9}]{CGPZ1}.

\subsection {Renormalised conical zeta values}
\label{ss:renczv}

For a lattice cone $(C, \Lambda )$ in $(\R ^\infty, \Z ^\infty)$, were they well-defined, the formal sums
$$\sum _{n\in C^o\cap \Lambda _C}1 \quad  {\rm and}\  \sum _{n\in C\cap \Lambda _C}1 ,
$$
would yield values characteristic of the lattice cone, but they are unfortunately divergent. In order to extract information from these divergent expressions,  a univariate regularisation is shown to be less appropriate (see \cite {GPZ3}) than  multivariate regularisations, which appear as very natural:
$$S^o(C, \Lambda _C):=\sum _{n\in C^o\cap \Lambda _C}e^{\langle n,z\rangle} \ \text{ and }\  S^c(C, \Lambda _C):=\sum _{n\in C \cap \Lambda _C}e^{\langle n,z\rangle}.
$$
By subdivision techniques, we can extend $S^o$ and $S^c$ to linear maps from $\Q \bfc$ to $\calm$, which are locality algebra homomorphisms as discussed in Section \ref {subs:LocCones}. These are regularised maps for the formal expressions.

Therefore we have renormalised open conical zeta values for a lattice cone $(C, \Lambda _C)$
$$\zeta ^o(C, \Lambda _C):=({\rm ev}_0 \circ \pi_+^Q \circ S^o)(C, \Lambda _C)
$$
and renormalised closed conical zeta values for a lattice cone $(C, \Lambda _C)$
$$\zeta ^c(C, \Lambda _C):=({\rm ev}_0 \circ \pi_+^Q \circ S^c)(C, \Lambda _C).
$$
In fact, the function $ (\pi_+^Q \circ S^o)(C, \Lambda _C)$ or $(\pi_+^Q \circ S^c)(C, \Lambda _C)$ contains important geometry information for lattice cones -- they are building blocks of Euler-MacLaurin formula for lattice cones. Because of their geometric nature,  these formal expression can easily be renormalised by means of locality morphisms.

\subsection {Branched zeta values} \label{subsec:branchedzeta} In \cite{CGPZ2}, this multivariate renormalisation scheme was applied to
renormalise a branched generalisation of multiple zeta values. Renormalised multiple zeta values are related to the renormalisation of the formal sum
$$\sum _{n_1>\cdots > n_k{>0}}1.
$$
This formal sum is an  iterated sum corresponding to a   {totally ordered structure}, and   can therefore be viewed as a sum over  {ladder} trees. It generalises to  branched sums on more general partially ordered structures such as rooted trees.

In order to renormalise such  branched formal sums, we construct the regularisation maps from the locality morphisms in Section \ref {subs:LocLin} and Section \ref {subs:LocSym}
\begin{equation}\label{eq:Zlambda} {\mathcal Z}^\lambda=\underset{+\infty}{\rm fp}\circ \widehat{\frakS _\lambda} : (\R \calf _{\Omega, \perp ^Q}, \top _{\calf _{\Omega, \perp ^Q}}) \to (\calm, \perp ^Q).\end{equation}

Once    the regularisation is chosen,   a specific choice of meromorphic germs of symbols $x\longmapsto \sigma(s)(x):=\chi(x)\,x^{-s}$ on $\R_{\geq 0}$ \cite[Definition 5.1]{CGPZ2}, where $\chi$ is an excision function around zero,   leads to a generalisation of multiple zeta functions,
  namely {\bf regularised branched zeta functions}
  $$\zeta^{\rm reg, \lambda}: \R \calf _{\Omega, \perp ^Q} \to \calm .$$

Due to the locality of the morphisms involved in its construction, $\zeta^{\rm reg, \lambda}$ is a \loc morphism of \loc algebras. Composing on the left with the renormalised evaluation at zero ${\rm ev}_0\circ \pi_+^Q$ leads to  {\bf renormalised branched zeta values}
$$\zeta^{\rm ren, \lambda}: \R \calf _{\Omega, \perp ^Q} \to \R.$$

\subsection {Kreimer's toy model}\label{subsec:Kreimer} In \cite{CGPZ3}, this multivariate renormalisation scheme was applied to
 Kreimer's toy model \cite{K} which recursively assigns formal iterated integrals to rooted forests
induced by the formal grafting operator:
$$\beta _+(f)(x)=\int _0^\infty \frac {f(y)}{y+x}dy.
$$

There are different ways to regularise these divergent integrals. We adapt the regularisation by universal property of rooted forests studied in Section \ref {subs:operation}:
$$\calr _1: \R \calf _{\call, \perp^Q}\to \calm.
$$
Applying the renormalisation scheme to this locality map, we have the renormalised value to a properly decorated forest $(F,d)$
$$({\rm ev}_0 \circ \pi_+^Q \circ \calr _1 )(F,d).
$$

 We refer the reader to~\cite{CGPZ3} for further details.

\smallskip

\noindent
{\bf Acknowledgements.} The authors acknowledge supports from the Natural Science Foundation of China (Grant Nos. 11521061 and 11771190) and  the German Research Foundation (DFG project FOR 2402). They are grateful to the hospitalities of Sichuan University and University of Potsdam where parts of the work were completed.

\end{document}